\newcolumntype{L}[1]{>{\raggedright\let\newline\\\arraybackslash\hspace{0pt}}m{#1}}
\newcolumntype{C}[1]{>{\centering\let\newline\\\arraybackslash\hspace{0pt}}m{#1}}
\newcolumntype{R}[1]{>{\raggedleft\let\newline\\\arraybackslash\hspace{0pt}}m{#1}}
\newtheorem{theorem}{Theorem}
\newtheorem{definition}{Definition}
\newtheorem{assumption}{Assumption}
\newtheorem{lemma}{Lemma}
\newtheorem{corollary}{Corollary}
\newtheorem{remark}{Remark}
\newtheorem{proposition}{Proposition}
\newtheorem{problem}{Problem}
\algnewcommand{\algorithmicgoto}{\textbf{go to}}%
\algnewcommand{\Goto}[1]{\algorithmicgoto~\ref{#1}}%
\algnewcommand{\LineComment}[1]{\Statex \(\triangleright\) #1}
\algnewcommand{\LineCommentN}[1]{\Statex \hspace{1cm}\(\triangleright\) #1}
\DeclareFontFamily{OT1}{pzc}{}
\DeclareFontShape{OT1}{pzc}{m}{it}{<-> s * [1.10] pzcmi7t}{}
\DeclareMathAlphabet{\pzc}{OT1}{pzc}{m}{it}
\newcommand\Item[1][]{%
  \ifx\relax#1\relax  \item \else \item[#1] \fi
  \abovedisplayskip=0pt\abovedisplayshortskip=0pt~\vspace*{-\baselineskip}}
\journal{Journal of \LaTeX\ Templates}
\newcommand{\yong}[1]{{\color{black} #1}}
\newcommand{\moh}[1]{{\color{black} #1}}
\begin{document}

\begin{frontmatter}

\title{Simultaneous Mode, State and Input Set-Valued Observers for \yong{Switched Nonlinear}  Systems} 



\author{Mohammad Khajenejad}
\ead{mkhajene@asu.edu}


\author{Sze Zheng Yong\corref{mycorrespondingauthor}}
\cortext[mycorrespondingauthor]{Corresponding author}
\ead{szyong@asu.edu}
\address{School for Engineering of Matter, Transport and Energy, Arizona State University, Tempe, AZ, USA}





\begin{abstract}
In this paper, we study the problem of designing \yong{a} simultaneous mode, input and state set-valued observer for a class of hidden mode switched nonlinear systems with bounded-norm noise and 
unknown input signals, \yong{where the hidden mode and unknown inputs can represent fault or attack models and exogenous \yong{fault/}disturbance or adversarial signals, respectively.} The \yong{proposed multiple-model} design 
has three constituents: (i) a bank of mode-matched \yong{set-valued} observers, 
 (ii) a mode 
 \yong{observer} and (iii) a global fusion observer.  
The mode-matched \yong{observers recursively find the} sets of compatible states and unknown inputs \yong{conditioned on the mode being the true mode,} 
while the mode 
\yong{observer} eliminates incompatible modes \yong{by leveraging a residual-based criterion}. Then, the global fusion observer outputs the estimated sets of states and unknown inputs by taking the union of the mode-matched \yong{set-valued} estimates over all compatible modes. Moreover, 
sufficient conditions to guarantee the elimination of all false modes (i.e., mode detectability) are provided and 
		the effectiveness of our approach is 
		\yong{demonstrated and compared with existing approaches} using an illustrative example. 
\end{abstract}

\begin{keyword}
Fault detection\sep Mode estimation\sep Set-valued observers\sep Switched systems\sep Nonlinear systems
\MSC[2010] 00-01\sep  99-00
\end{keyword}

\end{frontmatter}


\section{Introduction}
\yong{Cyber-Physical Systems (CPS), which tightly couple communication and computation elements, can enhance the functionality of control systems and improve their performance. However, these features may also become a source of vulnerability to attacks or faults. On the other hand, autonomous systems, e.g., self-driving cars or robots, typically must operate without the direct knowledge of the intentions and decisions of other systems/agents. These systems, which can be conveniently considered within the general framework of hidden mode hybrid/switched systems (HMHS, see, e.g., \cite{verma2011safety,yong2018switching,yong2021simultaneous} and references therein) with unknown inputs, are often safety-critical. Thus, the ability to estimate the states, unknown inputs and modes of such systems is important for monitoring these systems as well as for designing feedback controllers with safety and security guarantees.}

\emph{Literature review.} The problem of designing filters/observers for hidden mode systems, without considering unknown inputs/faults/data injection attacks, has been extensively studied, e.g., in \cite{Bar-Shalom.2002,Mazor.1998} and references therein. \yong{Recently,} the work in \cite{yong2018switching,yong2021simultaneous} proposed an extension to include unknown inputs for stochastic systems, aiming to obtain \emph{point} estimates, i.e., the most likely or best single estimates. \yong{However, probabilistic distributions of uncertainty are often unavailable and moreover, it may also be desirable to consider set-valued uncertainties, e.g., bounded-norm noise, especially when hard guarantees or bounds are important. In the latter setting, \emph{set-membership} or \emph{set-valued} state observers, e.g., \cite{dahleh1994control,shamma1999set,blanchini2012convex}, have been proposed to estimate the set of compatible states, and later, extensions of this framework  to include the estimation of unknown inputs have been proposed in \cite{yong2018simultaneous,khajenejad2019acc,khajenejad2020simultaneousnonlinear}. Nonetheless, these approaches are not directly applicable to systems with hidden modes that are considered in this paper.}

To consider hidden modes, \yong{which can be used to model/represent fault or attack models,} 
a common approach is to construct \emph{residual} signals (see, e.g., \cite{yong2018switching,yong2021simultaneous,Bar-Shalom.2002,patton2013issues,giraldo2018survey,Pasqualetti.2013}),  
where 
a threshold 
based on the residual signal is used to 
distinguish between 
consistent and inconsistent modes. 
\yong{In the context of resilient state estimation against sparse data injection attacks,}
\cite{nakahira2018attack} presented a robust control-inspired 
\yong{approach for linear systems}
{with bounded-norm noise that} 
consists of 
local estimators, 
residual detectors, and a global fusion detector. 
Similar residual-based techniques have been used for uniformly observable nonlinear systems 
in \cite{kim2018detection} \yong{and some classes of nonlinear systems in \cite{chong2020secure}. However, these approaches only consider sparse attacks on the sensors, which is a special case of a hidden mode system, as was discussed in our previous work for  hidden mode switched linear stochastic systems in \cite{yong2018switching}. Thus, to our best knowledge, the design of an estimator for hidden mode switched nonlinear systems with unknown inputs and bounded-norm noise remains an open problem.}

%

 \emph{Contributions.} 
 \yong{To bridge this gap, this paper considers the problem of simultaneous mode, state and unknown input estimation for hidden mode switched nonlinear systems with bounded-norm noise, where the hidden mode represents a fault or attack model. To tackle this problem, our preliminary conference publication \cite{khajenejad2019simultaneousmode} proposed a multiple-model approach for  
 switched linear systems. In this paper, we further extend this approach to 
 hidden mode switched nonlinear systems with unknown inputs using a similar multiple-model approach, which consists of}  
 a bank of mode-matched set-valued observers and a novel \emph{elimination-based} mode observer. \yong{The mode-matched set-valued observers are based on the optimally designed set-valued state and input $\mathcal{H}_\infty$ observers in our recent work \cite{khajenejad2020simultaneousnonlinear}, while the mode observer eliminates} 
%
%
 inconsistent modes from the bank of observers by using the upper bound of the norm of 
 to-be-designed residual signals \yong{as a threshold. In particular, we propose a tractable method to calculate an upper bound signal for the residual's norm by carefully over-approximating} the value function of a non-concave NP-hard norm-maximization problem \yong{with}  
%
 a convex maximization \yong{problem} over a convex set that has a finite number of extreme points \yong{in a manner that guarantees that no compatible modes are eliminated.} 
 We also prove that the upper bound signal is a convergent sequence. 
  Furthermore, we provide sufficient conditions for \emph{mode detectability}, i.e., for guaranteeing that all false modes will be eventually ruled out under some reasonable assumptions. Finally, we compare the performance of our proposed approach with \yong{an existing $\mathcal{H}_\infty$ observer in the literature.} 

\emph{Notation.} 
$\mathbb{R}^n$ denotes the $n$-dimensional Euclidean space, and 
 $\mathbb{N}$ 
 \yong{the set of} nonnegative integers. 
 For a vector $v \in \mathbb{R}^n$, $\|v\|_2 \triangleq \sqrt{v^\top v}$ and $\| v \|_{\infty} \triangleq \max \limits_{1 \leq i \leq n} v_i$, and for a matrix $M \in \mathbb{R}^{p \times q}$, $\|M\|_2$, $\sigma_{\min}(M)$ and $M(i:j)$ denote \yong{the induced $2$-norm, the smallest non-trivial 
 singular value and the sub-matrix consisting of the $i$-th through $j$-th columns of $M$, respectively. Further, $0_{n \times m}$ denotes an $n$-by-$m$ zero matrix. }
\section{Problem Statement}
Consider a 
 hidden mode switched nonlinear system with bounded-norm noise and unknown inputs (i.e., 
 a hybrid system with nonlinear and noisy system dynamics in each mode, where 
 the mode and some inputs are not known/measured): 
\begin{align} \label{eq:sys_desc}
\begin{array}{ll}
\hspace{-0.25cm}
 x_{k+1}&=f^q (x_k)+B^q u^q_k+G^{q} d^q_{k} +W^q w^q_k, 
\\ 
y_k&=C^q x_k+ D^q u^q_k +H^{q} d^q_k+ v^q_{k},
\end{array}
\end{align}
where $x_k \in \mathbb{R}^n$ is the continuous system state and $q \in \mathbb{Q}=\{1,2,\dots,Q\} \subset \mathbb{N}$ is the hidden discrete state or \emph{mode}. For each $q \in \mathbb{Q}$, $y_k \in \mathbb{R}^l$ is the measurement output signal and $w^q_k \in \mathbb{R}^n$ and $v^q_k \in \mathbb{R}^l$ are \yong{external} process and measurement disturbances with known \yong{$\ell_2$-norm bounds, i.e., $\|w_k\|_2\le \eta_w$ and $\|v_k\|_2 \le \eta_v$,}  
respectively. Moreover, $u^q_k \in U_{k} \subset \mathbb{R}^m$ is the \emph{known} input and $d^q_k \in \mathbb{R}^p$ the unknown input signal \yong{(representing, e.g., the input of other agents/robots or adversarially injected data signal)}. 
It is worth mentioning that no prior `useful' knowledge or assumption of the dynamics of $d^q_k$ is assumed. For each (fixed) mode $q$, the mapping $f^q(\cdot):\mathbb{R}^n \to \mathbb{R}^n$ and the matrices $B^q \in \mathbb{R}^{n \times m}$, $G^q \in \mathbb{R}^{n \times p}$, $C^q \in \mathbb{R}^{l \times n}$, $D^q \in \mathbb{R}^{l \times m}$ and $H^q \in \mathbb{R}^{l \times p}$ are 
 the corresponding mode-dependent known state vector field and system matrices, respectively. 
 
 \yong{The above modeling framework can capture a very broad range of problems, including intention estimation, fault detection and resilient state estimation against sparse data injection and switching/mode attacks. Specifically, in the context of intention estimation or fault diagnosis, each mode represents an intent or fault model and the unknown inputs can model the inputs of other agents/robots or exogenous fault signals. On the other hand, with regard to resilient state estimation, the switching/mode attacks (e.g., attacks on circuit breakers) can be represented with a set of different $f^q(\cdot)$, $B^q$, $C^q$ and $D^q$, while the unknown attack location of sparse data injection attacks can be modeled by a set of different $G^q$ and $H^q$ that represent the different hypotheses for which actuators and sensors are attacked or not attacked. Further, the attack signal magnitudes can be modeled as the unknown inputs in this scenario.} 

 In addition, we 
 assume the following: 
 \begin{assumption}\label{ass:true_mode}
 There is only one ``true" mode, i.e. the true mode $q^*$ is constant over time.
 \end{assumption}\label{ass:Lip}
  \begin{assumption}
For each $q \in \mathbb{Q}$, $f^q(\cdot)$ is twice continuously differentiable and Lipschitz continuous on its domain with 
\yong{a} known Lipschitz constant $L^q_f >0$.
 \end{assumption}

\yong{Using} the above modeling framework, the 
simultaneous state, unknown input and hidden mode estimation problem \yong{based on a multiple-model framework}  
can be stated as follows:
\begin{problem}
Given a \yong{hidden mode switched nonlinear discrete-time system with unknown inputs and bounded-norm noise} in the form of \eqref{eq:sys_desc}, 
\begin{enumerate}[(i)]
\item Design a bank of mode-matched observers, \yong{where each mode-matched observer,} 
\yong{conditioned} on the mode being true, 
optimally returns the set\yong{-valued} estimates  
of compatible states and unknown inputs in the minimum $\mathcal{H}_\infty$-norm sense, i.e., with minimum average power amplification.
\item \yong{Find a threshold} 
criterion to eliminate false modes and \yong{subsequently,} 
develop a mode 
\yong{observer} via elimination. 
\item Derive sufficient conditions for \yong{the} elimination of all false modes.
\end{enumerate}
\end{problem}
\section{Proposed Observer Design}
In this section, we propose a multiple-model approach for simultaneous mode, state and unknown input estimation for the system in \eqref{eq:sys_desc}, with the 
\yong{goal} of \yong{recursively} finding \yong{the sets of states $\hat{X}_k$, unknown inputs $\hat{D}_k$ and modes $\hat{\mathbb{Q}}_k$ that are compatible with observed outputs $y_k$.} 
\subsection{Overview of Multiple-Model Approach} \label{sec:prelim}
 The multiple-model design approach 
 \yong{consists of} three steps: (i) designing a bank of mode-matched set-valued observers, (ii) \yong{developing} 
 a mode observer for eliminating incompatible modes using \yong{a residual-based threshold}, 
 and (iii) \yong{devising} a global fusion observer that returns the desired set-valued mode, input and state estimates. 
\subsubsection{Mode-Matched Set-Valued Observer} \label{sec:ULISE}
First, based on the optimal fixed-order observer {design} in \cite{khajenejad2020simultaneousnonlinear}, we develop a bank of mode-matched observers, 
\yong{which} includes $Q \in \mathbb{N}$ simultaneous state and input $\mathcal{H}_\infty$ set-valued observers, which can be briefly summarized as follows. 
For each mode-matched observer corresponding to mode $q$, 
following the approach in \cite[Section 4]{khajenejad2020simultaneousnonlinear}, we consider set-valued fixed-order estimates \yong{in the form of $\ell_2$-norm balls:} 
\begin{align}
\hat{D}^{q}_{k-1}&=\{d_{k-1} \in \mathbb{R}^p: \|d_{k-1}-\hat{d}^{q}_{k-1}\|\yong{_2}\leq \delta^{d,q}_{k-1}\},\\
\hat{X}^{q}_k&=\{x_k \in \mathbb{R}^n: \|x_k-\hat{x}^{q}_{k|k}\|\yong{_2} \leq \delta^{x,q}_k\},
\end{align}
where their centroids $\hat{x}^q_{k|k}$ and $\hat{d}^q_{k-1}$ 
are obtained with 
the following three-step recursive observer that is optimal in $\mathcal{H}_{\infty}$-norm sense (cf. \cite[Section 4.2]{khajenejad2020simultaneousnonlinear} for more details): 

\noindent\emph{Unknown Input Estimation}:
\begin{align}\label{eq:UIE}
\begin{array}{rl}
\hat{d}^{q}_{1,k} &=M^{q}_{1} (z^{q}_{1,k}-C^{q}_{1} \hat{x}^{q}_{k|k}-D^{q}_{1} u^{q}_k),\\
\hat{d}^{q}_{2,k-1}&=M^{q}_{2} (z^{q}_{2,k}-C^{q}_{2} \hat{x}^{q}_{k|k-1}-D^{q}_{2} u^{q}_k),\\
\hat{d}^{q}_{k-1}&= V^{q}_{1} \hat{d}^{q}_{1,k-1} + V^{q}_{2} \hat{d}^{q}_{2,k-1}; \end{array}
\end{align}
\emph{Time Update}:
\begin{align}
\hspace{-0.3cm}\begin{array}{rl}
 \hat{x}^{q}_{k|k-1}\hspace{-0.1cm}&=f^q( \hat{x}^{q}_{k-1 | k-1}) + B^q u^{q}_{k-1} + G^{q}_{1} \hat{d}^{q}_{1,k-1}, \\
\hat{x}^{\star,q}_{k|k}&=\hat{x}^{q}_{k|k-1}+G^{q}_{2} \hat{d}^{q}_{2,k-1}; 
\end{array}\hspace{-0.3cm}
\end{align}
\emph{Measurement Update}:
\begin{align}
\hat{x}^{q}_{k|k}
&= \hat{x}^{\star,q}_{k|k} +\tilde{L}^{q}  (z^{q}_{2,k}-C^{q}_{2} \hat{x}^{\star,q}_{k|k}-D^{q}_{2} u^{q}_k),  \quad \label{eq:stateEst}
\end{align}
where $C^q_1$, $C^q_2$, $D^q_1$, $D^q_2$, $G^q_1$, $G^q_2$, $V^q_1$, $V^q_2$, $z^q_{1,k}$ and $z^q_{2,k}$ can be computed by applying a similarity transformation described in \yong{\ref{app:transformation}} 
and 
$\tilde{L}^q \in \mathbb{R}^{n \times (l-p_{H^q})}$, $M^q_{1} \in \mathbb{R}^{p_{H^q} \times p_{H^q}}$ and $M^q_{2} \in \mathbb{R}^{(p-p_{H^q}) \times (l-p_{H^q})}$ are observer gain matrices that are chosen \yong{via} 
the following \yong{P}roposition \ref{prop:filterbanks}. This proposition is a restatement of the 
results in \cite{khajenejad2020simultaneousnonlinear} \yong{that is} tailored 
to the  setting \yong{considered} in this paper, where the main idea is to minimize the ``volume'' of the set of compatible states and unknown inputs, quantified by the radii $\delta^{d,q}_{k-1}$ and $\delta_k^{x,q}$. 
\begin{proposition}\cite[Proposition 5.16, Lemma 5.1 \& Theorem 5.13]{khajenejad2020simultaneousnonlinear} \label{prop:filterbanks}
Consider system \eqref{eq:sys_desc} and a bank of $Q$ mode-matched observers in the form of \eqref{eq:UIE}--\eqref{eq:stateEst}. Suppose that $\forall q \in \mathbb{Q}\triangleq \{1,\dots,Q\}$, ${\rm rk}(C^q_2G^q_2)=p-p_{H^q}$ and $M^q_{1},M^q_{2}$ are chosen as $M^q_{1}= (\Sigma^q)^{-1}$ and $M^q_{2}=(C^q_{2} G^q_{2})^\dagger$, where $\Sigma^q$ is obtained by applying singular value decomposition on $H^q$ \yong{(cf. \ref{app:transformation} for more details)}. 
Then, the following statements hold: 
\begin{enumerate}[(a)]
\item Given mode $q \in \mathbb{Q}$, the following difference equation governs the state estimation error dynamics {(i.e., {the dynamics of} $\tilde{x}^q_{k|k} \triangleq x_k-\hat{x}^q_{k|k}$)}:
\begin{align}\label{eq:errors-dynamics}
\tilde{x}^q_{k+1|k+1}=(I-\tilde{L}^qC^q_2)\Phi^q (\Delta f_k^q-\Psi^q \tilde{x}^q_{k|k})+\mathcal{W}^q(\tilde{L}^q)\overline{w}^q_k,
\end{align}
where 
\begin{align*}
\Delta f^q_k &\triangleq f^q(x_k)-f^q(\hat{x}^q_k), \quad \Phi^q \triangleq I-G^q_2 M^q_2 C^q_2,\\
 \overline{w}^q_k &\triangleq \begin{bmatrix} (\frac{1}{\sqrt{2}})v^{q\top}_k & w^{q\top}_k & (\frac{1}{\sqrt{2}})v^{q\top}_{k+1}  \end{bmatrix}^\top,\\
  R^q &\triangleq \begin{bmatrix} -\sqrt{2}\Phi^q G^q_1M^q_1T^q_1 & -\Phi^q W^q & -\sqrt{2}G^q_2M^q_2T^q_2  \end{bmatrix},\\
   Q^q &\triangleq \begin{bmatrix} 0_{(l-p_{H^q}) \times l} & 0_{(l-p_{H^q}) \times n} & -\sqrt{2}T^q_2 \end{bmatrix}, \\
    \Psi^q &\triangleq G^q_1M^q_1C^q_1, 
     \quad \mathcal{W}^q(\tilde{L}^q) \triangleq (I-\tilde{L}^qC^q_2)R^q+\tilde{L}^qQ^q.
    \end{align*}

\item Solving the following mixed-integer SDP for each mode $q$:
\yong{
\begin{align*}
&(\rho^{\star}_q)^2=\hspace{-.35cm}\min_{\{P \succ 0,\Gamma \succ 0,\tilde{\Gamma} \succeq 0,\breve{Q} \succeq 0,Y, \breve{Z},\rho^2>0,0 \leq\alpha \leq 1,\varepsilon_1>0,\varepsilon_2>0,{\kappa>0},\kappa_1 >0,\kappa_2>0\}} \rho^2 \\ 
& s.t. \begin{bmatrix} P & \tilde{Y}^q_1 \\ \tilde{Y}^{q\top}_1 & \tilde{\mathbf{M}}^q_{1}  \end{bmatrix} \succeq 0,\begin{bmatrix} P & \tilde{Y}^q_2 \\ \tilde{Y}^{q\top}_2 & \tilde{\mathbf{M}}^q_{2} \end{bmatrix} \succeq 0, \begin{bmatrix} P & \tilde{Y}^q_1 \\ \tilde{Y}^{q\top}_1 & \tilde{\mathbf{M}}^q_{3} \end{bmatrix} \succeq 0,\\
&\quad \ \begin{bmatrix} P & \tilde{Y}^q_2 \\ \tilde{Y}^{q\top}_2 & \ \breve{Z} \end{bmatrix} \succeq 0, \  \ \begin{bmatrix} \tilde{\Gamma} & \breve{Z} \\ \breve{Z}^\top  & \Psi^{q\top} \breve{Q} \Psi^q \end{bmatrix} \succeq 0, \\
&\quad \   \begin{bmatrix} I-\Gamma & 0 & 0 \\ 0 & P & Y \\ 0 & Y^\top & I \end{bmatrix} \succeq 0, \
 \begin{bmatrix} \mathcal{N}^q_{11} & * & *  \\ \mathcal{N}^q_{21} & \mathcal{N}^q_{22} & *  \\ \mathcal{N}^q_{31} & 0 & \mathcal{N}^q_{33}  \end{bmatrix} \succeq 0,\\
& \quad \ \ \kappa_1I \preceq P \preceq \kappa_2 I,  \   \land \ (  (\kappa_1 \geq 1, \kappa_2-\kappa_1 <1) \ \vee \ (\kappa_2 \leq 1, \kappa_1 >0.5)),
\end{align*}}
\yong{we obtain an observer in the form of \eqref{eq:UIE}--\eqref{eq:stateEst} with the observer gain $\tilde{L}^{q}=(P^q)^{-1}Y^{q}$, where $(P^{q},Y^{q})$ are solutions to the above mixed-integer SDP, that}
\begin{itemize}
 \item \yong{is quadratically stable,} 
 and 
\item guarantees that 
\begin{align}\label{eq:thetaq}
\theta^q \triangleq \|(I-\tilde{L}^qC^q_2)\Phi^q \|\yong{_2} <1,
\end{align}
and consequently, \yong{the upper bound sequences for the radii 
$\{{\delta}^{x,q}_k,\moh{{\delta}^{d,q}_{k-1}}\}_{k=1}^{\infty}$, 
which are computed as:} 
\begin{align}\label{eq:deltax}
\begin{array}{ll}
{\delta}^{x,q}_k &\triangleq \delta^x_0 \yong{(\theta^q)}^k+\overline{\eta}^q\frac{1-{\yong{(\theta}^q)}^k}{1-\theta^q}, \\ 
{\delta}^d_{k-1} &\triangleq \beta^q {\delta}^{x,q}_{k-1}+ \overline{\alpha}^q,
\end{array}
\end{align}
 are convergent to some steady state value \yong{${\delta}^{x,q}_{\infty},\moh{{\delta}^{d,q}_{\infty}}$} 
\yong{(cf. \ref{app:matrices} for definitions of ${\delta}^{x,q}_{\infty}$ and $\moh{{\delta}^{d,q}_{\infty}}$, as well as the matrices and parameters in the above SDP and \eqref{eq:deltax})}. 
\end{itemize}
\end{enumerate}
\end{proposition}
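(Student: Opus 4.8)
\emph{Proof strategy.} The plan is to establish (a) and (b) by specializing the general fixed-order $\mathcal{H}_\infty$ set-valued observer analysis of \cite{khajenejad2020simultaneousnonlinear} to each fixed mode $q$, since, conditioned on $q$ being the true mode, the switched dynamics \eqref{eq:sys_desc} collapse to a single non-switched nonlinear system with unknown inputs and bounded-norm noise of precisely the type treated there; the Lipschitz regularity needed is exactly Assumption \ref{ass:Lip}.

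For part (a), I would obtain \eqref{eq:errors-dynamics} by direct substitution. First, apply the similarity transformation of \ref{app:transformation} to partition the output into $z^q_{1,k}$ (the component algebraically tied to $d^q_k$ through $H^q$) and $z^q_{2,k}$, with the matching partition of $d^q_k$ into $d^q_{1,k},d^q_{2,k}$ and the induced splittings of $C^q$, $D^q$, $G^q$, $V^q$, $T^q$. Substituting the transformed measurements into the unknown-input estimators \eqref{eq:UIE} with the prescribed gains $M^q_{1}=(\Sigma^q)^{-1}$ and $M^q_{2}=(C^q_2 G^q_2)^\dagger$ --- both well defined exactly because ${\rm rk}(C^q_2 G^q_2)=p-p_{H^q}$ --- expresses the input-estimation errors in terms of $\tilde x^q_{k|k}$, $\tilde x^q_{k|k-1}$ and the noise. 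Propagating these through the time update and then the measurement update \eqref{eq:stateEst}, and collecting the state-error term, the nonlinearity-difference term $\Delta f^q_k$, and the stacked-noise term $\overline{w}^q_k$, yields \eqref{eq:errors-dynamics} with the stated $\Phi^q$, $\Psi^q$, $R^q$, $Q^q$ and $\mathcal{W}^q(\tilde L^q)$. This step is pure bookkeeping; the only care needed is to keep the $\tfrac{1}{\sqrt 2}$ scalings on $v^q_k,v^q_{k+1}$ consistent so that $\|\overline{w}^q_k\|_2 \le \sqrt{\eta_w^2+\eta_v^2}$.

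For part (b), I would show that each LMI block of the mixed-integer SDP is a Schur-complement / S-procedure certificate for a guarantee on \eqref{eq:errors-dynamics}, and then invoke the corresponding statements of \cite{khajenejad2020simultaneousnonlinear}. Concretely, the Lyapunov-type inequality built from $P\succ 0$, $Y$ and the slacks $\varepsilon_1,\varepsilon_2,\kappa$ is the S-procedure relaxation --- against the Lipschitz bound $\|\Delta f^q_k\|_2\le L^q_f\|\tilde x^q_{k|k}\|_2$ and the noise-norm bound --- of a dissipation inequality that simultaneously certifies quadratic stability of the error dynamics and an $\mathcal{H}_\infty$ (average-power-amplification) bound $\rho^2$ from $\overline{w}^q_k$ to $\tilde x^q_{k|k}$; minimizing $\rho^2$ under the linearizing substitution $Y=P\tilde L$ returns the optimal gain $\tilde L^q=(P^q)^{-1}Y^q$. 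The remaining blocks involving $\Gamma,\tilde\Gamma,\breve Q,\breve Z$ together with $I-\Gamma\succeq 0$, $\kappa_1 I\preceq P\preceq\kappa_2 I$ and the disjunctive condition on $(\kappa_1,\kappa_2)$ are exactly what forces the strict contraction $\theta^q=\|(I-\tilde L^q C^q_2)\Phi^q\|_2<1$: they bound this operator norm by a $P$-weighted quadratic form and then exploit $\kappa_2/\kappa_1$ close enough to one to conclude $\theta^q<1$.

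Finally, once $\theta^q<1$ is in hand, \eqref{eq:deltax} is merely the closed form of the affine contraction $\delta^{x,q}_k=\theta^q\delta^{x,q}_{k-1}+\overline{\eta}^q$ started at $\delta^{x,q}_0=\delta^x_0$, so it converges geometrically to $\delta^{x,q}_\infty\triangleq\overline{\eta}^q/(1-\theta^q)$, and then $\delta^{d,q}_{k-1}=\beta^q\delta^{x,q}_{k-1}+\overline{\alpha}^q\to\beta^q\delta^{x,q}_\infty+\overline{\alpha}^q\triangleq\delta^{d,q}_\infty$. The main obstacle is part (b): faithfully reconstructing how the several LMI blocks plus the integer and disjunctive constraints jointly certify both minimal-$\mathcal{H}_\infty$-gain quadratic stability \emph{and} the extra requirement $\theta^q<1$ (which is not implied by standard $\mathcal{H}_\infty$ synthesis), so I would lean on the detailed S-procedure derivations of \cite{khajenejad2020simultaneousnonlinear} rather than redo them here.
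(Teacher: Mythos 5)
Your proposal is correct and matches the paper's treatment: the paper gives no independent proof of Proposition \ref{prop:filterbanks}, but simply imports it from \cite[Proposition 5.16, Lemma 5.1 \& Theorem 5.13]{khajenejad2020simultaneousnonlinear} by fixing the mode $q$ and specializing the non-switched analysis there, which is exactly the route you outline (direct substitution through \eqref{eq:UIE}--\eqref{eq:stateEst} for part (a), the S-procedure/Schur-complement certificates of the reference for part (b), and the geometric-series closed form of the affine contraction for \eqref{eq:deltax}). Your sketch, including the $\tfrac{1}{\sqrt{2}}$ bookkeeping for $\overline{w}^q_k$ and the limit $\overline{\eta}^q/(1-\theta^q)$, is consistent with the statement as written.
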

\subsubsection{Mode Observer}
To estimate the set of compatible modes, we consider an elimination approach that compares the $\ell_2$-norm of \emph{residual} signals against some thresholds. Specifically, we will eliminate a specific mode $q$, if $\|r^{q}_k\|_2>\hat{\delta}^{q}_{r,k}$, where the residual signal $r^{q}_k$ is defined as follows and the thresholds $\hat{\delta}^{q}_{r,k}$ will be derived in Section  \ref{sec:MainResult}.
\vspace{-0.1cm}
\begin{definition} [Residuals] \label{defn:computedresidual}
For each mode $q$ at time step $k$, 
 the residual signal is defined as:
\begin{align}
\nonumber r^{q}_k \triangleq z^{q}_{2,k}-C^{q}_{2} \hat{x}^{\star,q}_{k|k}-D^{q}_{2} u^{q}_k.
\end{align}
\end{definition}
\subsubsection{Global Fusion Observer}
Finally, combining the outputs of both components above, our proposed global fusion observer will provide
 mode, unknown input and state set-valued estimates at each time step $k$ as:
\begin{align*}
\begin{array}{c}
\hat{\mathbb{Q}}_k=\{q \in \mathbb{Q} \ \vline \ \|r^q_k\|_2 \leq \hat{\delta}^q_{r,k} \},
\\ \hat{D}_{k-1}=\cup_{q \in \hat{\mathbb{Q}}_k} D^q_{k-1},\  \hat{X}_k=\cup_{q \in \hat{\mathbb{Q}}_k} X^q_{k}.
\end{array}
\end{align*} 

\noindent The 
multiple-model approach is summarized 
 in Algorithm \ref{algorithm1}.

 \begin{algorithm}[!t] \small
\caption{Simultaneous Mode, State and Input Estimation }\label{algorithm1}
\begin{algorithmic}[1]
  \State $\hat{\mathbb{Q}}_0=\mathbb{Q}$;
  \For {$k =1$ to $N$}
  \For {$q \in \hat{\mathbb{Q}}_{k-1}$}
  \LineComment{Mode-Matched State and Input Set-Valued Estimates}
 \Statex \hspace{0.4cm} Compute $T^q_2,M^q_{1},M^q_{2},\tilde{L}^q,\hat{x}^{\star,q}_{k|k},\hat{X}^{q}_{k},\hat{D}^q_{k-1}$ via Proposition \ref{prop:filterbanks};
\Statex \hspace{0.4cm} $z^q_{2,k}=T^q_2y_k$;
 \LineComment{Mode Observer via Elimination}
 \Statex \hspace{0.4cm} $\hat{\mathbb{Q}}_k=\hat{\mathbb{Q}}_{k-1}$;
 \Statex \hspace{0.4cm} Compute $r^q_k$ via Definition \ref{defn:computedresidual} and $\hat{\delta}^q_{r,k}$ via Theorem \ref{thm:resid_comp_up_bound};
  \If {$\|r^q_k\|_2>\hat{\delta}^q_{r,k}$} $\hat{\mathbb{Q}}_k=\hat{\mathbb{Q}}_{k} \backslash \{q\}$;
 \EndIf
 \EndFor
 \LineComment{State and Input Estimates}
 \State $\hat{X}_k=\cup_{q \in \hat{\mathbb{Q}}_k} \hat{X}^q_k$; \ $\hat{D}_k=\cup_{q \in \hat{\mathbb{Q}}_k} \hat{D}^q_k$;
 \EndFor
\end{algorithmic}
\end{algorithm}
\vspace{-0.1cm}
\subsection{Mode Elimination Approach} \label{sec:MainResult}
\vspace{-0.05cm}
We leverage a relatively simple idea to develop a criterion for elimination of false modes, as follows. We rule out a particular mode as incompatible, if the $\ell_2$-norm of  
its corresponding residual signal exceeds its upper bound conditioned on this mode being true. 
To do so, for each mode $q$, we first compute an upper bound ($\hat{\delta}^q_{r,k}$) for the $\ell_2$-norm of its corresponding residual at time $k$, conditioned on $q$ being the \emph{true} mode. Then, comparing the $\ell_2$-norm of residual signal in Definition \ref{defn:computedresidual} with $\hat{\delta}^q_{r,k}$, mode $q$ can be eliminated if the residual's $\ell_2$-norm is strictly greater than the upper bound. 
 The following proposition and theorem formalize this procedure. 
\begin{proposition} \label{prop:residecomposition}
Consider mode $q$ at time step $k$, its residual signal $r^q_k$ (as defined in Definition \ref{defn:computedresidual}) and the unknown true mode $q^{*}$. Then, 
\begin{align}
\nonumber &r^q_k=r^{q|*}_k+\Delta r^{q|q*}_k, 
\end{align} 
{with}
\begin{align}
\nonumber &r^{q|*}_k \triangleq z^{q*}_{2,k}-C^q_2 \hat{x}^{\star,q}_{k|k}-D^q_2 u^q_{k}=T^{q*}_2y_k-C^q_2\hat{x}^{\star,q}_{k|k}-D^q_2 u^q_{k},
\\ \nonumber &\Delta r^{q|q*}_k \triangleq (T^q_2-T^{q*}_2)y_k,
 \end{align}
 {\!\!\!\! where $r^{q|*}_k$ 
is the true mode's residual signal (i.e., $q=q^*$), 
and $\Delta r^{q|q^*}_k$ 
is the \emph{residual error}.}

 \end{proposition}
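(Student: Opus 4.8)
The plan is to obtain the stated decomposition by a direct substitution from Definition~\ref{defn:computedresidual}, combined with the defining relation between the transformed output $z^q_{2,k}$ and the raw measurement $y_k$. First I would recall that, by the similarity transformation used to construct the mode-matched observers (cf.~\ref{app:transformation} and Algorithm~\ref{algorithm1}), for \emph{every} mode $q \in \mathbb{Q}$ the transformed output satisfies $z^q_{2,k} = T^q_2 y_k$, where $y_k$ is the output actually produced by \eqref{eq:sys_desc} under the (unknown) true mode $q^{*}$; in particular $z^{q^{*}}_{2,k} = T^{q^{*}}_2 y_k$. Substituting $z^q_{2,k} = T^q_2 y_k$ into $r^q_k \triangleq z^q_{2,k} - C^q_2 \hat{x}^{\star,q}_{k|k} - D^q_2 u^q_k$ gives $r^q_k = T^q_2 y_k - C^q_2 \hat{x}^{\star,q}_{k|k} - D^q_2 u^q_k$.

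Next I would add and subtract $T^{q^{*}}_2 y_k$ and regroup the terms:
\[
r^q_k = \bigl(T^{q^{*}}_2 y_k - C^q_2 \hat{x}^{\star,q}_{k|k} - D^q_2 u^q_k\bigr) + \bigl(T^q_2 - T^{q^{*}}_2\bigr) y_k .
\]
The first parenthesized term is precisely $r^{q|*}_k$ (equivalently $z^{q^{*}}_{2,k} - C^q_2 \hat{x}^{\star,q}_{k|k} - D^q_2 u^q_k$), i.e.\ the residual that the mode-$q$ observer would generate if the output transformation were that of the true mode, while the second term is exactly $\Delta r^{q|q^{*}}_k$; this yields $r^q_k = r^{q|*}_k + \Delta r^{q|q^{*}}_k$. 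Specializing to $q = q^{*}$ gives $T^q_2 = T^{q^{*}}_2$, hence $\Delta r^{q|q^{*}}_k = 0$ and $r^q_k = r^{q|*}_k$, which justifies calling $r^{q|*}_k$ the true mode's residual and $\Delta r^{q|q^{*}}_k$ the residual error incurred by a wrong mode hypothesis.

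The argument is essentially a one-line manipulation, so there is no real obstacle. The only point requiring care is to make explicit that $z^q_{2,k} = T^q_2 y_k$ holds for the false modes as well, with $T^q_2$ applied to the \emph{true} measured output $y_k$ (rather than to some hypothetical mode-$q$ output), so that the identity is unconditional and the entire dependence on the unknown true mode is isolated in the single term $\Delta r^{q|q^{*}}_k = (T^q_2 - T^{q^{*}}_2) y_k$ --- the quantity that the subsequent threshold construction in Theorem~\ref{thm:resid_comp_up_bound} must over-approximate.
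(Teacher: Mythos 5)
Your proof is correct and follows exactly the route of the paper's own (one-line) proof: substitute $z^q_{2,k}=T^q_2 y_k$ into Definition~\ref{defn:computedresidual}, add and subtract $T^{q^*}_2 y_k$, and regroup. Your added remark that $T^q_2$ acts on the true measured output $y_k$ for every hypothesized mode is a useful clarification but does not change the argument.
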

  \vspace{-0.2cm}
 \begin{proof}
This follows directly from plugging the above expressions into the right hand side term of Definition \ref{defn:computedresidual}.
 \end{proof}
 \vspace{-0.3cm}
 \begin{theorem} \label{thm:online_mode_elimination}
 Consider mode $q$ and its residual signal $r^q_k$ at time step $k$.
 Assume that $\delta^{q,*}_{r,k}$ is any signal that satisfies $\| r^{q|*}_k\|_2 \leq \delta^{q,*}_{r,k}$, 
  where $r^{q|*}_k$ is defined in Proposition \ref{prop:residecomposition}. Then, mode $q$ is not the true mode, i.e., can be eliminated at time $k$, if $\| r^q_k \|_2 > \delta^{q,*}_{r,k}.$
 \end{theorem}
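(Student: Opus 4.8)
The plan is to prove the contrapositive: if mode $q$ is the true mode, then $\|r^q_k\|_2 \le \delta^{q,*}_{r,k}$, so that the strict violation $\|r^q_k\|_2 > \delta^{q,*}_{r,k}$ forces $q \ne q^*$. First I would observe that when $q = q^*$, the residual error term $\Delta r^{q|q^*}_k = (T^q_2 - T^{q^*}_2)y_k$ from Proposition \ref{prop:residecomposition} vanishes identically, since $T^q_2 = T^{q^*}_2$. Hence by the decomposition $r^q_k = r^{q|*}_k + \Delta r^{q|q^*}_k$ we get $r^q_k = r^{q|*}_k$ whenever $q$ is the true mode.

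Next I would invoke the standing hypothesis of the theorem, namely that $\delta^{q,*}_{r,k}$ is any signal satisfying $\|r^{q|*}_k\|_2 \le \delta^{q,*}_{r,k}$. Combining this with the identity just established, if $q = q^*$ then $\|r^q_k\|_2 = \|r^{q|*}_k\|_2 \le \delta^{q,*}_{r,k}$. Taking the contrapositive, $\|r^q_k\|_2 > \delta^{q,*}_{r,k}$ implies $q \ne q^*$, which is exactly the claim that mode $q$ can be eliminated at time $k$.

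The argument is essentially a one-line logical manipulation once the decomposition of Proposition \ref{prop:residecomposition} is in hand, so there is no real obstacle at the level of this theorem; the genuine work is deferred to later results. Specifically, the nontrivial part of the overall development is \emph{constructing} a concrete, computable signal $\delta^{q,*}_{r,k}$ (or the tighter $\hat{\delta}^q_{r,k}$ used in Algorithm \ref{algorithm1}) that provably upper-bounds $\|r^{q|*}_k\|_2$ under the assumption that $q$ is the true mode — this is where one must propagate the state estimation error dynamics \eqref{eq:errors-dynamics}, use the contraction property $\theta^q < 1$ from \eqref{eq:thetaq}, the radii bounds \eqref{eq:deltax}, and over-approximate the norm-maximization problem by a convex maximization over a polytope, as previewed in the contributions. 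I would therefore keep the proof of Theorem \ref{thm:online_mode_elimination} deliberately short and self-contained, flagging that the existence and explicit form of a valid $\delta^{q,*}_{r,k}$ is supplied by the subsequent Theorem \ref{thm:resid_comp_up_bound}.
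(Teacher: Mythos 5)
Your proof is correct and follows essentially the same route as the paper's: both hinge on the observation that if $q=q^*$ then $T^q_2=T^{q^*}_2$, so $\Delta r^{q|q^*}_k=0$ and $\|r^q_k\|_2=\|r^{q|*}_k\|_2\le\delta^{q,*}_{r,k}$; the paper phrases this as a contradiction while you phrase it as a contrapositive, which is the same argument. Your remark that the real work is deferred to the construction of $\hat{\delta}^q_{r,k}$ in Theorem \ref{thm:resid_comp_up_bound} is also consistent with the paper's development.
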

 \vspace{-0.2cm}
 \begin{proof}
 To use contradiction, suppose that $\| r^q_k \|_2 > \delta^{q,*}_{r,k}$ and let $q$ be the true mode, i.e., $q=q^*$ and thus, $T^q_2=T^{q*}_2$. 
 By Proposition \ref{prop:residecomposition}, $\Delta r^{q|q*}_k=0$ and hence, $\|r^q_k \|_2=\|r^{q|*}_k \|_2 \leq \delta^{q,*}_{r,k}$, which contradicts with the assumption. 
 \end{proof}
\yong{By the above theorem, our approach guarantees that the true mode is never eliminated. However, Theorem \ref{thm:online_mode_elimination} only provides 
a sufficient condition for mode elimination at each time step and the capability of our proposed mode observer to eliminate as many false modes as possible is dependent on the tightness of the upper bound, $\delta^{q,*}_{r,k}$.}
 
  \vspace{-0.1cm}
 \subsection{Tractable Computation of Thresholds} 
\vspace{-0.05cm}
To apply the sufficient condition \yong{in Theorem \ref{thm:online_mode_elimination}}, we need \yong{a tractable approach} to compute \yong{the upper bound $\delta^{q,*}_{r,k}$ that is finite-valued. This procedure is derived and described in the following.} 
 \vspace{-0.2cm}
\begin{lemma} \label{lem:resdef}
Consider any mode $q$ with the unknown true mode being $q^{*}$. Then, at time step $k$, we have  
\begin{align}
 r^{q|*}_k &= C^q_2 \tilde{x}^{\star,q}_{k|k}+v^q_{2,k}=\mathbb{A}^q_k {t}_k, \label{eq:resid_ideal}
 \end{align}
 where
 \begin{align*}
  &{t}_k \triangleq \hspace{-.1cm}\begin{bmatrix} \tilde{x}^{\top}_{0|0} & \hspace{-.2cm} v^{q\top}_0 & \dots & v^{q\top}_{k} & \hspace{-.2cm}w^{q\top}_0 & \dots & w^{q\top}_{k-1} &\hspace{-.2cm} \Delta f_0^{q\top} \dots \Delta f_{k-1}^{q\top} \end{bmatrix}^{\top}\hspace{-.2cm} \in \mathbb{R}^{{(n+l)(k+1)+nk}},\\
  &\mathbb{A}^q_k \triangleq [ A^q_k \ \ J^{q,1}_{k-1} \ \ (J^{q,2}_{k-1}+J^{q,1}_{k-2})  \cdots  (J^{q,2}_1+ J^{q,1}_0) \ \ J^{q,2}_0 \ \ J^{q,3}_{k-1} \dots J^{q,3}_{0} \ \ F^q_{k-1} \dots F^q_0 ],\\
  &A^q_k \triangleq (-1)^k((I-\tilde{L}^qC^q_2)\Phi^q\Psi^q)^k, \\
  & J^{q}_i \triangleq \begin{cases}\mathcal{Y}_q, \hfill \text{if} \ i=0, \\
  -C^q_2\Phi^qG^q_1M^q_1C^q_1(I-\tilde{L}^qC^q_2)^{i-1}\mathcal{W}^q, \quad \quad \quad \quad \quad \quad \quad \quad\quad \text{if} \ 1\leq i \leq k-1,
     \end{cases} \\
     & F^{q}_i \triangleq \begin{cases} C^q_2\Phi^q,\quad \quad \quad \quad \quad \quad \quad \quad\quad \quad \quad \quad \quad \quad \quad \quad \quad\quad \quad \quad \quad \quad \quad \quad \     \text{if} \ i=0, \\
  (-1)^{i}C^q_2\Phi^qG^q_1M^q_1C^q_1((I-\tilde{L}^qC^q_2)\Psi^q)^{i-1}(I-\tilde{L}^qC^q_2)\Phi^q, \ \ \text{if} \ 1\leq i \leq k-1,
     \end{cases} \\
     & \mathcal{Y}_q \triangleq \begin{bmatrix} -\sqrt{2}C^q_2\Phi^q G^q_1M^q_1T^q_1 & C^q_2\Phi^q W^q & \sqrt{2}(I-C^q_2G^q_2M^q_2)T^q_2  \end{bmatrix},\\
     & J^{q,1}_i \triangleq J^{q}_{i}(1:l), \ J^{q,2}_i \triangleq J^{q}_{i}(l+1:2l), \ J^{q,3}_i \triangleq J^{q}_{i}(2l+1:2l+n),  i=1,\dots, k-1.  
  \end{align*} 
\end{lemma}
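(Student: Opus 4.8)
\textbf{Proof proposal for Lemma \ref{lem:resdef}.}

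The plan is to unroll the error recursion \eqref{eq:errors-dynamics} from time $0$ up to time $k$ and substitute the result into the expression $r^{q|*}_k = C^q_2 \tilde{x}^{\star,q}_{k|k} + v^q_{2,k}$, collecting coefficients of each of the primitive uncertainty terms that make up the vector ${t}_k$. First I would establish the first equality in \eqref{eq:resid_ideal}, namely $r^{q|*}_k = C^q_2 \tilde{x}^{\star,q}_{k|k} + v^q_{2,k}$: starting from $r^{q|*}_k = z^{q*}_{2,k} - C^q_2 \hat{x}^{\star,q}_{k|k} - D^q_2 u^q_k$ (Proposition \ref{prop:residecomposition} with $q = q^*$), I would substitute the definition of $z^q_{2,k} = T^q_2 y_k$ and the output equation of \eqref{eq:sys_desc} transformed by $T^q_2$, so that $z^{q*}_{2,k} = C^q_2 x_k + D^q_2 u^q_k + v^q_{2,k}$ where $v^q_{2,k} = T^q_2 v^q_k$ (modulo the $\sqrt{2}$ scalings in the augmented-noise bookkeeping of \ref{app:transformation}); subtracting then yields $C^q_2(x_k - \hat{x}^{\star,q}_{k|k}) + v^q_{2,k} = C^q_2 \tilde{x}^{\star,q}_{k|k} + v^q_{2,k}$.

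Next I would derive the closed form for $\tilde{x}^{\star,q}_{k|k}$ in terms of the earlier errors. From the measurement-update step \eqref{eq:stateEst} and the time-update step, one has $\tilde{x}^{\star,q}_{k|k} = \Phi^q(\Delta f^q_{k-1} - \Psi^q \tilde{x}^q_{k-1|k-1}) + (\text{noise terms at step } k-1)$, and by \eqref{eq:errors-dynamics} the full-update error satisfies $\tilde{x}^q_{k|k} = (I - \tilde{L}^q C^q_2)\Phi^q(\Delta f^q_{k-1} - \Psi^q \tilde{x}^q_{k-1|k-1}) + \mathcal{W}^q(\tilde{L}^q)\overline{w}^q_{k-1}$. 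Iterating this linear recursion down to the initial error $\tilde{x}^q_{0|0} = \tilde{x}_{0|0}$ gives $\tilde{x}^q_{k|k}$ as a sum: the homogeneous part contributes $((-1)^k((I-\tilde{L}^qC^q_2)\Phi^q\Psi^q)^k)\tilde{x}_{0|0}$ — which is exactly $A^q_k$ — while each $\Delta f^q_i$ and each component of $\overline{w}^q_i$ (i.e.\ each $v^q_i$, $w^q_i$) picks up a product of transition matrices $((I-\tilde{L}^qC^q_2)\Phi^q\Psi^q)$-powers times the appropriate input matrix. Then premultiplying the penultimate-step version by $C^q_2\Phi^q$ (for the $\hat{x}^{\star,q}_{k|k}$ form rather than $\hat{x}^q_{k|k}$) and adding the $v^q_{2,k}$ term produces the coefficient blocks: $F^q_i$ for the $\Delta f^q_i$ terms, and $J^{q,1}_i, J^{q,2}_i, J^{q,3}_i$ for the $v^q$- and $w^q$-blocks, with the $v^q_i$ for intermediate $i$ appearing twice (once as the "second half" $v$-contribution from step $i$ and once as the "first half" from step $i-1$, hence the sums $J^{q,2}_{i}+J^{q,1}_{i-1}$), and the boundary terms $\mathcal{Y}_q$ at $i=0$ and $J^{q,2}_0$, $F^q_0 = C^q_2\Phi^q$ handled separately. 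Careful index bookkeeping then matches each column block of $\mathbb{A}^q_k$ to the corresponding sub-vector of ${t}_k$, giving $r^{q|*}_k = \mathbb{A}^q_k {t}_k$.

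The main obstacle I anticipate is the bookkeeping of the augmented noise vector $\overline{w}^q_k = [\tfrac{1}{\sqrt{2}}v^{q\top}_k \ w^{q\top}_k \ \tfrac{1}{\sqrt{2}}v^{q\top}_{k+1}]^\top$: because each $v^q_k$ enters the dynamics at two consecutive time steps (as the "$k+1$" tail of $\overline{w}^q_{k-1}$ and as the "$k$" head of $\overline{w}^q_k$), collecting its total coefficient requires summing two transition-matrix products, and one must also reconcile the $\sqrt{2}$ factors appearing in $R^q$, $Q^q$, $\mathcal{Y}_q$ and the definition of $v^q_{2,k}$. I would handle this by first writing the recursion purely in terms of $\overline{w}^q_i$ and $\Delta f^q_i$, obtaining clean transition-matrix expressions, and only at the very end re-expanding $\overline{w}^q_i$ into its $v,w$ components and merging the repeated $v$-contributions — this is exactly the source of the $(J^{q,2}_{i} + J^{q,1}_{i-1})$ blocks. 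The remaining steps — verifying the closed forms of $A^q_k$, $F^q_i$, $J^q_i$ by induction on $k$ — are routine once the recursion is unrolled, so I would state them and defer the mechanical induction.
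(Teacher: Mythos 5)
Your proposal is correct and follows essentially the same route as the paper: establish the first equality from the transformed output equation, use the time/measurement-update structure to write $\tilde{x}^{\star,q}_{k|k}=\Phi^q(\Delta f^q_{k-1}-\Psi^q\tilde{x}^q_{k-1|k-1})+(\text{noise})$ (the paper cites this as (A.11) of \cite{khajenejad2020simultaneousnonlinear}), unroll the error recursion \eqref{eq:errors-dynamics} down to $\tilde{x}^q_{0|0}$, and collect coefficients into the blocks $A^q_k$, $F^q_i$, $J^{q,j}_i$. Your explicit handling of the overlap of $v^q_i$ across consecutive $\overline{w}^q$ vectors, which produces the $(J^{q,2}_i+J^{q,1}_{i-1})$ blocks, is a detail the paper leaves implicit, but the argument is the same.
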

\vspace{-0.2cm}\begin{proof}
The first equality \yong{in  \eqref{eq:resid_ideal}} comes from Definition \ref{defn:computedresidual} and $z^{q}_{2,k} = C^{q}_{2} x_k + D^{q}_{2,k} u^q_k + v^{q}_{2,k}$ from \yong{\eqref{eq:sysY} in \ref{app:transformation},} 
assuming that $q$ is the true mode. \yong{To obtain} 
the second equality, note that  \cite[(A.11)]{khajenejad2020simultaneousnonlinear} returns 
\begin{align}
\tilde{x}^{\star,q}_{k|k} &= \Phi^q [ \Delta f^q_{k-1} - G^q_{1}M^q_1C^q_1 \tilde{x}^q_{k-1|k-1}] \label{eq:xstar}
  +\textstyle  \tilde{w}^q_{k},
  \\ \nonumber  \tilde{w}^q_{k} &\triangleq -\Phi^q (G^q_1M^q_1v^q_{1,k-1}-W^q w^q_{k-1})-G^q_2M^q_2v^q_{2,k}.  
  \end{align}
  Now, from the first equality and \eqref{eq:xstar}, we have
  \begin{align} \label{eq:rqstar}
  r^{q|*}_k=C^q_2\Phi^q(\Delta f^q_{k-1}-G^q_1M^q_1C^q_1\tilde{x}^q_{k-1|k-1})+\mathcal{Y}^q \overline{w}^q_{k-1}.  
  \end{align}
  On the other hand, by iteratively applying \eqref{eq:errors-dynamics}, we obtain:
  \begin{align} \label{eq:tildexq}
 \nonumber \tilde{x}^q_{k|k}&=\sum_{i=1}^{i-1}[((I-\tilde{L}^qC^q_2)\Psi^q)^{i-1}(I-\tilde{L}^qC^q_2)\Phi^q\Delta f^q_{k-i}+(I-\tilde{L}^qC^q_2)^{i-1}\mathcal{W}^q\overline{w}^q_{k-i+1}]\\
  &+(-1)^k((I-\tilde{L}^qC^q_2)\Phi^q\Psi^q)^k\tilde{x}^q_{0|0}.
  \end{align}
  Combining \eqref{eq:rqstar} and \eqref{eq:tildexq} yields
  \begin{align*}
  r^{q|*}_k=A^q_k \tilde{x}^q_{0|0}+\sum_{i=0}^{k-1} F^q_i \Delta f^q_{k-1-i}+J^q_i \overline{w}^q_{k-i},
  \end{align*} 
  \yong{which} is equivalent to the second equality in \eqref{eq:resid_ideal}.
     \end{proof} 
     
\begin{lemma} \label{lem:existance}
For each mode $q$ at time step $k$, there exists a 
finite-valued upper bound $\delta^{q}_{r,k} < \infty$ for  $ \| r^{q|*}_k \|_2$.
 \end{lemma}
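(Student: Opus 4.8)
The plan is to use the explicit representation $r^{q|*}_k = \mathbb{A}^q_k t_k$ from Lemma~\ref{lem:resdef} together with submultiplicativity of the induced $2$-norm, i.e. $\|r^{q|*}_k\|_2 \le \|\mathbb{A}^q_k\|_2\,\|t_k\|_2$. By its definition in Lemma~\ref{lem:resdef}, the matrix $\mathbb{A}^q_k$ is a fixed finite array assembled from the mode-$q$ system matrices, observer gains and transformation matrices, so $\|\mathbb{A}^q_k\|_2 < \infty$ for each $k$. The whole question therefore reduces to bounding the stacked vector $t_k$; and since $\|t_k\|_2^2 = \|\tilde{x}^q_{0|0}\|_2^2 + \sum_{i=0}^{k}\|v^q_i\|_2^2 + \sum_{i=0}^{k-1}\|w^q_i\|_2^2 + \sum_{i=0}^{k-1}\|\Delta f^q_i\|_2^2$, it suffices to bound each of these blocks.

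First I would dispose of the three easy blocks. The initial-error block is bounded by the (finite) radius of the initial set-valued estimate, $\|\tilde{x}^q_{0|0}\|_2 \le \delta^x_0$. The disturbance blocks are bounded by the standing assumptions $\|v^q_i\|_2 \le \eta_v$ and $\|w^q_i\|_2 \le \eta_w$, which give $\sum_{i=0}^{k}\|v^q_i\|_2^2 \le (k+1)\eta_v^2$ and $\sum_{i=0}^{k-1}\|w^q_i\|_2^2 \le k\eta_w^2$ — both finite at the fixed horizon $k$.

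The only block with real content — and the step I expect to be the main obstacle — is the vector-field increment block $\sum_{i=0}^{k-1}\|\Delta f^q_i\|_2^2$. Here I would invoke Assumption~\ref{ass:Lip}: since $\Delta f^q_i = f^q(x_i) - f^q(\hat{x}^q_{i|i})$ and $f^q$ is Lipschitz with constant $L^q_f$, we get $\|\Delta f^q_i\|_2 \le L^q_f\,\|\tilde{x}^q_{i|i}\|_2$. Because this entire analysis is conditioned on $q$ being the true mode, Proposition~\ref{prop:filterbanks} guarantees that the true state lies in the mode-matched estimate $\hat{X}^q_i$, i.e. $\|\tilde{x}^q_{i|i}\|_2 \le \delta^{x,q}_i$, where $\{\delta^{x,q}_i\}_{i\ge 0}$ is the sequence in \eqref{eq:deltax}, which is convergent (to $\delta^{x,q}_\infty$, as $\theta^q < 1$) and hence uniformly bounded over $i$. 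Consequently $\|\Delta f^q_i\|_2 \le L^q_f\delta^{x,q}_i < \infty$ for every $i$. Collecting the bounds then yields the explicit finite-valued threshold
$\delta^q_{r,k} \triangleq \|\mathbb{A}^q_k\|_2\sqrt{(\delta^x_0)^2 + (k+1)\eta_v^2 + k\eta_w^2 + (L^q_f)^2\sum_{i=0}^{k-1}(\delta^{x,q}_i)^2} < \infty$,
which establishes the lemma. The only nontrivial idea is recognizing that the Lipschitz property of $f^q$ converts the unmeasurable increment $\Delta f^q_i$ into a controlled multiple of the estimation error, whose radius is already supplied by Proposition~\ref{prop:filterbanks}; all the remaining estimates are routine norm inequalities at a fixed horizon.
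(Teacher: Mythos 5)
Your proof is correct, but it takes a genuinely different route from the paper's. The paper defines $\delta^{q}_{r,k}$ as the \emph{exact} optimal value of the norm-maximization problem $\max_{t_k}\|\mathbb{A}^q_k t_k\|_2$ over the constraint set $\|\tilde{x}_{0|0}\|_2\le\delta^x_0$, $\|v^q_i\|_2\le\eta^q_v$, $\|w^q_j\|_2\le\eta^q_w$, $\|\Delta f^q_j\|_2\le L^q_f\overline{\delta}^{x,q}_j$, and then argues finiteness abstractly: the objective is continuous and the constraint set is compact, so the maximum is attained by the Weierstrass theorem. You instead construct an explicit bound via submultiplicativity, $\|r^{q|*}_k\|_2\le\|\mathbb{A}^q_k\|_2\|t_k\|_2$, and bound $\|t_k\|_2$ blockwise; your handling of the $\Delta f^q_i$ block via the Lipschitz property and the radius guarantee from Proposition~\ref{prop:filterbanks} is exactly the justification underlying the paper's constraint $\|\Delta f^q_j\|_2\le L^q_f\overline{\delta}^{x,q}_j$, so that step is sound. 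What each approach buys: your bound is explicit and immediately computable (sidestepping entirely the NP-hardness the paper flags after the lemma), and is close in spirit to the $\delta^{q,tri}_{r,k}$ bound the paper later introduces in Theorem~\ref{thm:resid_comp_up_bound}, though generally looser since you apply a single global submultiplicative step. The paper's version, by contrast, deliberately names the \emph{tightest} bound compatible with those constraints --- an object it cannot compute but needs as the reference quantity that $\hat{\delta}^q_{r,k}$ over-approximates in Theorem~\ref{thm:resid_comp_up_bound}. For the lemma as stated (mere existence of a finite bound), both arguments are complete.
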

  \vspace{-0.2cm}\begin{proof}
 Consider the following optimization problem for {$ \| r^{q|*}_k \|_2$} by leveraging Lemma \ref{lem:resdef}:
\vspace{-0.05cm} \begin{align} \label{eq:genericupperbopund}
  &\delta^{q}_{r,k} \triangleq \max \limits_{t_k} \| \mathbb{A}^q_k {t}_k \|_2 
 \\ \nonumber  &s.t. \ t_k = \begin{bmatrix} \tilde{x}^{\top}_{0|0} &  v^{q\top}_0 & \dots & v^{q\top}_{k} & w^{q\top}_0 & \dots & w^{q\top}_{k-1} & \Delta f_0^{q\top} \dots \Delta f_{k-1}^{q\top} \end{bmatrix}^{\top},  
 \\ \nonumber &\| \tilde{x}_{0|0} \|_2 \leq \delta^x_0, \ \|v^q_i\|_2 \leq \eta^q_v, \ \| w^q_j \|_2 \leq \eta^q_w, \ \|\Delta f^q_j\|_2 \leq L^q_f \overline{\delta}^{x,q}_{j} \leq  L^q_f\overline{\delta}^{x,q}, 
 \\ \nonumber &i \in \{ 0,...,k \}, \ j \in \{ 0,...,k-1 \}.
 \end{align}
 The objective $\ell_2$-norm function is continuous and the constraint set is an intersection of level sets of lower dimensional norm functions, which is closed and bounded{,} so is compact. Hence, by the Weierstrass Theorem \cite[Proposition 2.1.1]{bertsekas2003convex}, the objective function attains its maxima on the constraint set and so a finite-valued upper bound exists. 
 \end{proof}
 
%


\vspace{-0.2cm}

Clearly, $\delta^{q}_{r,k}$ in Lemma \ref{lem:existance}, if computable, is the \emph{tightest} possible \yong{upper bound for the norm of the residual signal and using this as the threshold} 
can eliminate the most possible number of \yong{false} modes. 
However, note that although the existence \yong{proof} of a finite-valued $\delta^{q}_{r,k}$ is straightforward, the optimization problem in Lemma \ref{lem:existance} is NP-hard \cite{bodlaender1990computational}, since it is a \emph{norm maximization} (not minimization) over the intersection of level sets of lower dimensional norm functions, i.e., it is a non-concave maximization over intersection of quadratic constraints. To tackle 
this complexity, through the following Theorem \ref{thm:resid_comp_up_bound}, we propose a tractable over-approximation/upper bound for $\delta^{q}_{r,k}$, which we call $\hat{\delta}^q_{r,k}$ \yong{and is used instead as the elimination threshold}.
\begin{theorem} \label{thm:resid_comp_up_bound}
Consider mode $q$. At time step $k$, let
\begin{align}
\nonumber \hat{\delta}^q_{r,k} &\triangleq \min \{ {\delta}^{q,tri}_{r,k},\delta^{q,inf}_{r,k} \}, 
 \\   \delta^{q,tri}_{r,k} &\triangleq \sum_{i=0}^{k-2}L^q_f\|F^q_{i}\|_2\overline{\delta}^{x,q}_{k-1-i}+\frac{1}{\sqrt{2}}\eta^q_v(\|J^{q,1}_i\|_2+\|J^{q,3}_i\|_2)+\eta^q_w\|J^{q,2}_i\|_2 \label{eq:delta_tri} \\
\nonumber &+(\|A^q_k\|_2+L^q_f\|F^q_{k-1}\|_2)\delta^x_0+\frac{1}{\sqrt{2}}\eta^q_v(\|J^{q,1}_{k-1}\|_2+\|J^{q,3}_{k-1}\|_2)+\eta^q_w\|J^{q,2}_{k-1}\|_2,
 \\ \nonumber \delta^{q,inf}_{r,k} &\triangleq 
 \| \mathbb{A}^q_k {t}^{\star}_k \|_2, 
 \end{align}
 where \yong{$t_k^\star \triangleq \arg\max_{t_k \in \mathcal{T}_k} \| \mathbb{A}^q_k {t}_k \|_2$ and 
 $\mathcal{T}_k$ is the set of all vertices} of the following hypercube: 

\vspace{-0.3cm}
 \begin{align*}
 \begin{array}{l}
\mathcal{X}^q_k \triangleq \big\{ x \in \mathbb{R}^{{(n+l)(k+1)+nk}} \ \vline \\
| x(i) | \leq \begin{cases} \delta^x_0, \quad \quad   1 \leq i \leq n, \\
 \eta^q_v, \quad \quad   n+1 \leq i \leq n+{l(k+1)}, \\ 
 \eta^q_w, \quad \quad   n+{l(k+1)}+1 \leq i \leq {(n+l)(k+1)}, \\
L^q_f \delta^x_0, \quad   (n+l)(k+1)+1 \leq i \leq (n+l)(k+1)+n, \\
\vdots \\
L^q_f \overline{\delta}^{x,q}_j,   \ \ (n+l)(k+1)+jn+1 \leq i \leq (n+l)(k+1)+n(j+1), \\
\vdots \\
L^q_f \overline{\delta}^{x,q}_{k-1}, \  (n+l)(k+1)+(k-1)n+1 \leq i \leq (n+l)(k+1)+nk. \\
  \end{cases}\big\}.\end{array}
 \end{align*}
   Then, $\hat{\delta}^q_{r,k}$ is an over-approximation for $\delta^{q}_{r,k}$ in Lemma \ref{lem:existance}, \yong{i.e., $\hat{\delta}^q_{r,k} \ge {\delta}^q_{r,k}$}.
 \end{theorem}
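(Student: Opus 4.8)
The plan is to prove that each of the two quantities $\delta^{q,tri}_{r,k}$ and $\delta^{q,inf}_{r,k}$ is, by itself, an upper bound for the optimal value $\delta^q_{r,k}$ of the norm-maximization in Lemma~\ref{lem:existance}; since the minimum of two upper bounds is again an upper bound, this gives $\hat{\delta}^q_{r,k}\ge\delta^q_{r,k}$. Both arguments start from Lemma~\ref{lem:resdef}, i.e. from $r^{q|*}_k=\mathbb{A}^q_k t_k$, equivalently from the additive form $r^{q|*}_k = A^q_k\tilde{x}^q_{0|0} + \sum_{i=0}^{k-1}(F^q_i\,\Delta f^q_{k-1-i}+J^q_i\,\overline{w}^q_{k-i})$ obtained in its proof, together with the block-wise feasibility bounds on $t_k$ recorded in~\eqref{eq:genericupperbopund}.

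For the first bound I would apply the triangle inequality and sub-multiplicativity of the induced $2$-norm to this additive form: $\|A^q_k\tilde{x}^q_{0|0}\|_2\le\|A^q_k\|_2\delta^x_0$, $\|F^q_i\Delta f^q_{k-1-i}\|_2\le L^q_f\|F^q_i\|_2\overline{\delta}^{x,q}_{k-1-i}$, and, after splitting $\overline{w}^q_{k-i}=[\tfrac{1}{\sqrt2}v^{q\top}_{k-i}\ w^{q\top}_{k-i}\ \tfrac{1}{\sqrt2}v^{q\top}_{k-i+1}]^\top$ conformally with $J^q_i=[J^{q,1}_i\ J^{q,2}_i\ J^{q,3}_i]$, each remaining term by $\tfrac{1}{\sqrt2}\eta^q_v\|J^{q,1}_i\|_2$, $\eta^q_w\|J^{q,2}_i\|_2$ and $\tfrac{1}{\sqrt2}\eta^q_v\|J^{q,3}_i\|_2$ respectively. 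Re-indexing the sums, using $\overline{\delta}^{x,q}_0=\delta^x_0$ to merge the $i=k-1$ contribution of $F^q_{k-1}$ with the $A^q_k$ term and pulling the $i=k-1$ contributions of the $J^{q,\cdot}$ blocks out separately, then reproduces exactly~\eqref{eq:delta_tri}. Since this bound does not depend on the feasible point $t_k$, it dominates $\delta^q_{r,k}=\max_{t_k}\|\mathbb{A}^q_k t_k\|_2$, so $\delta^{q,tri}_{r,k}\ge\delta^q_{r,k}$.

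For the second bound, the key observation is that a constraint $\|\cdot\|_2\le\rho$ on any sub-vector forces each of its coordinates to obey $|\cdot|\le\rho$; applying this to all the block constraints in~\eqref{eq:genericupperbopund} shows that the (non-convex, quadratically constrained) feasible set of Lemma~\ref{lem:existance} is contained in the hypercube $\mathcal{X}^q_k$. Hence $\delta^q_{r,k}\le\max_{t_k\in\mathcal{X}^q_k}\|\mathbb{A}^q_k t_k\|_2$. The objective $t_k\mapsto\|\mathbb{A}^q_k t_k\|_2$ is convex, being a norm composed with a linear map, and $\mathcal{X}^q_k$ is a polytope equal to the convex hull of its finite vertex set $\mathcal{T}_k$; since a convex function attains its maximum over a polytope at an extreme point, $\max_{t_k\in\mathcal{X}^q_k}\|\mathbb{A}^q_k t_k\|_2=\max_{t_k\in\mathcal{T}_k}\|\mathbb{A}^q_k t_k\|_2=\|\mathbb{A}^q_k t^\star_k\|_2=\delta^{q,inf}_{r,k}$. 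Therefore $\delta^{q,inf}_{r,k}\ge\delta^q_{r,k}$, and combining the two parts gives $\hat{\delta}^q_{r,k}=\min\{\delta^{q,tri}_{r,k},\delta^{q,inf}_{r,k}\}\ge\delta^q_{r,k}$, as claimed.

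The routine but error-prone part is the bookkeeping in the first step: tracking which (merged) column blocks of $\mathbb{A}^q_k$ carry $J^{q,1}$, $J^{q,2}$, $J^{q,3}$ contributions and handling the boundary index $i=k-1$ consistently with the case distinctions in the definitions of $F^q_i$, $J^q_i$ and $\mathcal{Y}_q$. The second step is the conceptual crux — this is where the NP-hard norm maximization is traded for a finite vertex enumeration — but it requires nothing beyond the elementary containment of the feasible set in $\mathcal{X}^q_k$ and the standard fact that a convex function is maximized at a vertex of a polytope, so I do not anticipate any real obstacle there.
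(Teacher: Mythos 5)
Your proposal is correct and follows essentially the same route as the paper: it bounds $\delta^{q}_{r,k}$ by $\delta^{q,tri}_{r,k}$ via the triangle inequality and sub-multiplicativity applied to the additive form of $r^{q|*}_k$ from Lemma~\ref{lem:resdef}, and by $\delta^{q,inf}_{r,k}$ via the containment of the $\ell_2$-constrained feasible set in the hypercube $\mathcal{X}^q_k$ (the paper phrases this as $\|\cdot\|_\infty \le \|\cdot\|_2$) together with the fact that a convex function attains its maximum over a polytope at a vertex. Your write-up actually spells out the triangle-inequality bookkeeping that the paper dismisses as ``easy to see,'' but the underlying argument is identical.
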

  \vspace{-0.3cm}\begin{proof}
 Consider the following optimization problem:
  \begin{align}
  &\delta^{q,inf}_{r,k} \triangleq \max \limits_{t_k} \| \mathbb{A}^q_k {t}_k \|_2 \label{eq:inf_norm}
\\ \nonumber &s.t. \ t_k=t_k = \begin{bmatrix} \tilde{x}^{\top}_{0|0} &  v^{q\top}_0 & \dots & v^{q\top}_{k} & w^{q\top}_0 & \dots & w^{q\top}_{k-1} & \Delta f_0^{q\top} \dots \Delta f_{k-1}^{q\top} \end{bmatrix}^{\top},
 \\ \nonumber  &\ \ \ \ \  \| \tilde{x}_{0|0} \|_{\infty} \leq \delta^x_0, \ \|v^q_i\|_{\infty} \leq \eta^q_v, \ \| w^q_j \|_{\infty} \leq \eta^q_w, \ \|\Delta f^q_j\|_{\infty} \leq L^q_f \overline{\delta}^{x,q}_{j} 
 \\ \nonumber &\ \ \ \ \ \forall i \in \{ 0,...,k \}, \ \forall j \in \{ 0,...,k-1 \}. 
 \end{align}
 Comparing \eqref{eq:genericupperbopund} \yong{with} \eqref{eq:inf_norm}, the two problems have the same objective functions. \yong{Then,} 
 since $\| . \|_{\infty} \leq \| . \|_{2} $, the constraint set for \eqref{eq:genericupperbopund} is a subset of the one for \eqref{eq:inf_norm}. Hence $\delta^q_{r,k} \leq \delta^{q,inf}_{r,k}$. 
 Also, it is easy to see that 
$\yong{{\delta}^q_{r,k}}  \leq \delta^{q,tri}_{r,k}$, \yong{which is obtained using triangle inequality and the sub-multiplicative property of norms.} 
  Moreover, \eqref{eq:inf_norm} is a \emph{maximization} of a convex objective function over a convex constraint (hypercube $\mathcal{X}^q_k$). By a famous result \cite[Corollary 32.2.1]{rockafellar2015convex}, in such a problem, the objective function attains its maxima on some of the extreme points of the constraint set, which in this case are the vertices \yong{$\mathcal{T}_k$} of the hypercube $\mathcal{X}^q_k$. 
 \end{proof}

 Theorem \ref{thm:resid_comp_up_bound} enables us to obtain  an upper bound for $\|r^{q|*}_k\|_2$, by enumerating the objective function in \eqref{eq:inf_norm} \yong{for all} vertices of the hypercube $\mathcal{X}^q_k$ and choosing the largest value as $\delta^{q,inf}_{r,k}$. Moreover, we can easily calculate $\delta^{q,tri}_{r,k}$; then, the upper bound is chosen as 
 the minimum of the two as $\hat{\delta}^q_{r,k}$.
   \begin{remark}\label{rem:1} \yong{The reason for 
   not only using $\delta^{q,inf}_{r,k}$ is two-fold.}
 First, as time increases, the number of required enumerations \yong{for $\delta^{q,inf}_{r,k}$ (i.e., the cardinality of $\mathcal{T}_k$) can be shown to be $|\mathcal{T}_k|=2^{(n+l)(k+1)+kn}$, which increases at an exponential rate.} 
 Second and more importantly, as \yong{will be shown later in} Lemma \ref{lem:delta_inf_diverge}, 
 $\delta^{q,inf}_{r,k}$ goes to infinity as time increases, \yong{which renders it ineffective in the limit.} 
 \yong{On the other hand, Lemma \ref{lem:delta_inf_diverge} will show that $\delta^{q,tri}_{r,k}$ converges to some steady-state value, so it can always be used as an over-approximation for $\delta^{q}_{r,k}$ in the mode elimination process. Nonetheless, we chose to use the minimum of the two bounds, since our simulation results in Section \ref{sec:examples} show that $\delta^{q,inf}_{r,k}$ is generally smaller than $\delta^{q,tri}_{r,k}$ in the initial time steps.}
 \end{remark}
 
 \yong{Further, the following result that we will make use of later can be easily obtained as a corollary of Theorem \ref{thm:resid_comp_up_bound}.} 
 \begin{corollary} \label{cor:verticenorm}
 \yong{${t}^{\star}_k$ 
 (defined in Theorem \ref{thm:resid_comp_up_bound}) has the following norm:}
 \begin{align*}
 \eta^t_{k} \triangleq  \| {t}^{\star}_k \|_2=\sqrt{n( (1+{L^q_f}^2){\delta^x_0}^2+k {\eta^{q}_w}^2+{L^q_f}^2\sum_{j=1}^{k-1}{\overline{\delta}^{x,q}_j}^2)+l(k+1){\eta^{q}_v}^2}.
 \end{align*}
 \end{corollary}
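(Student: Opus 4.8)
The plan is to combine the vertex characterization already established in Theorem~\ref{thm:resid_comp_up_bound} with the elementary fact that \emph{every} vertex of an axis-aligned box has the same Euclidean norm. Concretely, Theorem~\ref{thm:resid_comp_up_bound} tells us that the maximizer $t^\star_k$ of $\|\mathbb{A}^q_k t_k\|_2$ over $\mathcal{X}^q_k$ is one of the vertices of the hypercube $\mathcal{X}^q_k$, and $\mathcal{X}^q_k$ is by definition the Cartesian product $\prod_i [-b_i,b_i]$, where $b_i$ denotes the coordinate-wise bound listed in the definition of $\mathcal{X}^q_k$. Any vertex $x$ of this box satisfies $x(i)\in\{-b_i,b_i\}$, hence $|x(i)|=b_i$ for every $i$ and therefore $\|x\|_2^2=\sum_i b_i^2$ independently of the sign pattern. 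Applying this to $x=t^\star_k$ reduces the corollary to computing $\sum_i b_i^2$ from the block structure of $\mathcal{X}^q_k$.

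The next step is to partition the index range $\{1,\dots,(n+l)(k+1)+nk\}$ according to the blocks appearing in the definition of $\mathcal{X}^q_k$ and read off the pair (block length, squared bound) for each: the $n$ coordinates of $\tilde{x}_{0|0}$ with bound $\delta^x_0$; the $l(k+1)$ coordinates of $(v^q_0,\dots,v^q_k)$ with bound $\eta^q_v$; the $nk$ coordinates of $(w^q_0,\dots,w^q_{k-1})$ with bound $\eta^q_w$; and, for each $j\in\{0,1,\dots,k-1\}$, the $n$ coordinates of $\Delta f^q_j$ with bound $L^q_f\overline{\delta}^{x,q}_j$, where the first such block ($j=0$) carries bound $L^q_f\delta^x_0$ under the convention $\overline{\delta}^{x,q}_0=\delta^x_0$. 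Summing the squares gives $\|t^\star_k\|_2^2 = n(\delta^x_0)^2 + l(k+1)(\eta^q_v)^2 + nk(\eta^q_w)^2 + n(L^q_f)^2(\delta^x_0)^2 + n(L^q_f)^2\sum_{j=1}^{k-1}(\overline{\delta}^{x,q}_j)^2$; grouping the two $(\delta^x_0)^2$ contributions and taking the square root yields exactly the stated formula for $\eta^t_k$.

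There is no genuine obstacle in this argument: it is the vertex result of Theorem~\ref{thm:resid_comp_up_bound} followed by a direct computation. The only points that require care are the bookkeeping of the block dimensions — in particular that the process-noise block has length $nk$ while each $\Delta f^q_j$ block has length $n$, so that the dimensions add up to $(n+l)(k+1)+nk$ — and matching the first $\Delta f$ block's bound $L^q_f\delta^x_0$ with the general bound $L^q_f\overline{\delta}^{x,q}_j$ evaluated at $j=0$, which is what makes the $(1+(L^q_f)^2)(\delta^x_0)^2$ factor appear in the final expression.
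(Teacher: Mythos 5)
Your proof is correct and matches the paper's (implicit) argument: the paper states this result without proof as something ``easily obtained'' from Theorem~\ref{thm:resid_comp_up_bound}, and the intended justification is exactly what you give --- $t^\star_k$ is by definition a vertex of the axis-aligned box $\mathcal{X}^q_k$, every vertex has squared norm equal to the sum of the squared coordinate bounds, and the block-by-block bookkeeping (with the $j=0$ block of $\Delta f$ carrying bound $L^q_f\delta^x_0$, which produces the $(1+(L^q_f)^2)(\delta^x_0)^2$ term) yields the stated formula. No gaps.
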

\section{Mode Detectability}
 In addition to the nice properties regarding the quadratic stability and boundedness of the mode-matched set\yong{-valued} estimates of \yong{the} state and \yong{unknown} input obtained from  \cite{khajenejad2020simultaneousnonlinear}, \yong{we are interested in guaranteeing the effectiveness} 
 of our mode elimination algorithm. \yong{Thus, in the following, we search for} 
 some sufficient conditions 
 \yong{based on the properties/structures of} the system dynamics and/or unknown input signals \yong{for guaranteeing that}
 \yong{the application of Algorithm \ref{algorithm1} can eliminate} 
 \emph{all} false (i.e., not true) modes 
 after some large enough \yong{number of} time steps. 
 
 \yong{To achieve this, we first} define the concept of \emph{mode detectability}.  

\begin{definition}[Mode Detectability] \label{defn:strong_mode_detecatble}
System \eqref{eq:sys_desc} 
is called \emph{mode detectable} if 
 there exists a natural number $K>0$, such that for all time steps $k \geq K$, all false modes are eliminated.
\end{definition}

\yong{Moreover, we consider two different sets of assumptions that we will use} 
 for deriving our sufficient conditions for mode detectability.

%
\begin{assumption} \label{assumption:boundedness}
There exist known $R_y, R_x \in \mathbb{R}$ such that $\forall k, y_{k} \in Y \triangleq \{ y \in \mathbb{R}^{l} \vline \ \| y \|_2 \leq R_y \}$ and $x_{k} \in X \triangleq \{ x \in \mathbb{R}^{n} \vline \ \| x \|_2 \leq R_x \}$, i.e., there exist known bounds for the whole observation/measurement and state spaces, respectively. 
\end{assumption}


\begin{assumption}\label{as:2}
\yong{The state space $X$ is bounded and} the unknown input 
signal has  \emph{unlimited energy}, i.e., $\displaystyle\lim_{k\to\infty} \|d^{q*}_{0:k}\|_2 = \infty$, where $d^{q*}_{0:k} \triangleq \begin{bmatrix} d^{q*\top}_k & d^{q*\top}_{k-1} & \dots d^{q*\top}_0  \end{bmatrix}^{\top}$. 
\end{assumption}
Note that \yong{the unlimited energy condition in} Assumption \ref{as:2} is not restrictive \yong{if $f(\cdot)$, $B$, $C$ and $D$ are mode-independent, since} 
otherwise, the unknown input 
signal must vanish asymptotically, which means that \yong{we effectively have a non-switched system in the limit and the mode estimation would be trivial.} 

\yong{Next, in} order to derive the desired sufficient conditions for mode-detectability in Theorem \ref{thm:strong_mode_detect}, we first present the following  Lemmas \ref{lem:delta_inf_diverge}--\ref{lem:resdef2}. 
\begin{lemma} \label{lem:delta_inf_diverge}
 For each mode $q$,
 \begin{align}
 &\lim_{k\to\infty} \delta^{q,inf}_{r,k}= \infty. \label{eq:r_inf_diverge} \\
    &\lim_{k\to\infty} \hat{\delta}^{q}_{r,k}=\lim_{k\to\infty} \delta^{q,tri}_{r,k}< \infty ,  \label{eq:r_tri_converge}
     \end{align}
\begin{proof}
To show \eqref{eq:r_inf_diverge}, we first find a lower bound for $\delta^{q,inf}_{r,k}$. Then, we prove that the lower bound diverges and so does $\delta^{q,inf}_{r,k}$. Define $\tilde{t}^{\star}_k \triangleq \frac{t^{\star}_k}{\eta^t_k}$, where $\eta^t_k$ is defined in Corollary \ref{cor:verticenorm}. Now consider  
 \begin{align*}
 \eta^t_k \sigma_{min}(\mathbb{A}^q_k) =  \sigma_{min}(\eta^t_k \mathbb{A}^q_k)= \min \limits_{\| t \|_2 \leq 1} \| \eta^t_k \mathbb{A}^q_k t \|_2 
& \leq 
\| \eta^t_k \mathbb{A}^q_k \tilde{t}^{\star}_k\|_2 
=\| \mathbb{A}^q_k t^{\star}_k \|_2 \triangleq \delta^{q,inf}_{r,k},
 \end{align*}
 where $\sigma_{min}(A)$ is the 
 \yong{smallest} non-trivial singular value of matrix $A$. The first equality holds since $\sigma_{\min}(.)$ is a linear operator \yong{and} the second equality is a special case of \yong{the} \emph{matrix lower bound} \cite{grcar2010matrix} when $\ell_2$-norms are considered. The inequality holds since $\|\tilde{t}^{\star}_k\|_2=1$ by Corollary \ref{cor:verticenorm}, so $\tilde{t}^{\star}_k$ is a feasible point for the minimization \yong{problem (i.e., $\min \limits_{\| t \|_2 \leq 1} \| \eta^t_k \mathbb{A}^q_k t \|_2$)} 
 and the last equality holds by Theorem \ref{thm:resid_comp_up_bound}. So far we have shown that $\eta^t_k \sigma_{\min}(\mathbb{A}^q_k)$ is a lower bound for $\delta^{q,inf}_{r,k}$. {Next,} we will {prove} that $\eta^t_k \sigma_{min}(\mathbb{A}^q_k)$ is unbounded. First, it is trivial \yong{to observe} that $\eta^t_k$ 
 \yong{grows} unbounded by its definition {in} Corollary \ref{cor:verticenorm}. Second, $\sigma_{\min}(\mathbb{A}^q_k) \leq \sigma_{\min}(\mathbb{A}^q_{k+1})$, since the latter is an augmentation of the former with additional columns. 
 Hence, $\eta^t_k \sigma_{\min}(\mathbb{A}^q_k)$ \yong{grows}  unbounded, since the product of {the} unbounded and positive $\sigma_{\min}(\mathbb{A}^q_k)$ and the unbounded and positive $\eta^t_k$ is unbounded. 

To prove \eqref{eq:r_tri_converge}, \yong{we show that $\{\delta^{q,tri}_{r,k}\}_{k=1}^{\infty}$ is a convergent sequence. Then, this fact, as well as \eqref{eq:r_inf_diverge} and the fact that $\hat{\delta}^q_{r,k} \triangleq \min \{ {\delta}^{q,tri}_{r,k},\delta^{q,inf}_{r,k} \}$ by Theorem \ref{thm:resid_comp_up_bound}, imply \eqref{eq:r_tri_converge}. To show the convergence of $\{\delta^{q,tri}_{r,k}\}_{k=1}^{\infty}$, starting from \eqref{eq:delta_tri},} we first show that $\forall q \in \mathbb{Q}$, $S^q_{1,k} \triangleq \sum_{i=0}^{k-2}L^q_f\|F^q_{i}\|_2\overline{\delta}^{x,q}_{k-1-i}+\frac{1}{\sqrt{2}}\eta^q_v(\|J^{q,1}_i\|_2+\|J^{q,3}_i\|_2)+\eta^q_w\|J^{q,2}_i\|_2$ on the right hand side of \eqref{eq:delta_tri} converges to some steady state value. Note that $\|F^q_i\|_2 \leq \mathcal{R}^q {\theta^q}^i$ by \yong{the} sub-multiplicative \yong{property of norms,} 
where $$\mathcal{R}^q \triangleq L^q_f\|C^q_2\Phi^qG^q_1M^q_1C^q_1\|_2\|\Psi^q\|_2\|\Phi^q\|\moh{_2}$$ and $ {\theta^q}$ is given in \eqref{eq:thetaq}. \yong{Combining this and \eqref{eq:deltax} implies that} 
\begin{align*}
\sum_{i=0}^{k-2}L^q_f\|F^q_{i}\|_2\overline{\delta}^{x,q}_{k-1-i} &\leq \mathcal{R}^q\left((\delta^x_0-\frac{\overline{\eta}^q}{1-\theta^q})(k-1){(\theta^q)}^{k-1}+\frac{\overline{\eta}^q}{1-\theta^q}\frac{1-{(\theta^q)}^{k-1}}{1-\theta^q}\right), 
\end{align*} 
\yong{and the upper bound tends to $\mathcal{R}^q \frac{\overline{\eta}^q}{(1-\theta^q)^2}$ as $k$ tends to $\infty$, since} 
$0 < \yong{\theta^q} <1$ (cf. \eqref{eq:thetaq}) and $\lim_{k \to \infty} k\yong{(\theta^q})\moh{^k}=0$ when $0 < \yong{\theta^q} <1$. Moreover, it follows from \yong{the} definitions of $J^q_i$ and $\theta^q$ (cf. Proposition \ref{prop:filterbanks} and Lemma \ref{lem:resdef}), as well as \yong{the} sub-multiplicative \yong{property of norms that:} 
\begin{align*}
\frac{1}{\sqrt{2}}\eta^q_v(\|J^{q,1}_i\|_2+\|J^{q,3}_i\|_2)+\eta^q_w\|J^{q,2}_i\|_2\leq\begin{cases}  O^q, \quad \ \ \ i=0, \\
S^q {\theta^q}^i, \quad i\geq 1,
\end{cases}
\end{align*}
where $ O^q \triangleq \eta^q_w(\|C^q_2\Phi^qG^q_1M^q_1T^q_1\|_2+\|(I-C^q_2G^q_2M^q_2)T^q_2\|_2)+\eta^q_v\|C^q_2\Phi^qW^q\|_2$ and $S^q \triangleq (\eta^q_w\|C^q_2\Phi^qG^q_1M^q_1C^q_1\|_2(\|\Phi^qG^q_1M^q_1T^q_1\|_2+\|G^q_2M^q_2T^q_2\|_2)\hspace{-.1cm}+\hspace{-.1cm}\eta^q_v\|\Phi^qW^q\|_2)$. \yong{Combining this and \eqref{eq:thetaq} results} in
\begin{align*}
\sum_{i=0}^{k-2}\frac{1}{\sqrt{2}}\eta^q_v(\|J^{q,1}_i\|_2+\|J^{q,3}_i\|_2)+\eta^q_w\|J^{q,2}_i\|_2 \leq O^q +S^q \frac{\theta^q-{\theta^q}^{k-1}}{1-\theta^q}, 
\end{align*}
\yong{where the upper bound tends to $\frac{S^q\theta^q}{1-\theta^q}$ as $k$ tends to $\infty$.}
 Next, it is straightforward to observe that \yong{all constitutent} terms in $S^q_{2,k} \triangleq (\|A^q_k\|_2+L^q_f\|F^q_{k-1}\|_2)\delta^x_0+\frac{1}{\sqrt{2}}\eta^q_v(\|J^{q,1}_{k-1}\|_2+\|J^{q,3}_{k-1}\|_2)+\eta^q_w\|J^{q,2}_{k-1}\|_2$ (on the right hand side of \eqref{eq:delta_tri}) are all 
 \yong{decreasing} to zero as $k$ increases, since they are all upper bounded by some 
 \yong{terms involving $({\theta^q})^k$} by their definitions \yong{(cf. Lemma \ref{lem:resdef})} and the sub-multiplicative \yong{property}. 
 Hence, $\lim_{k\to\infty} \delta^{q,tri}_{r,k}= \lim_{k\to\infty} (S^q_{1,k}+  S^q_{2,k})=  \lim_{k\to\infty} S^q_{1,k}< \infty$. 
\end{proof}
 \end{lemma}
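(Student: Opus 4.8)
The plan is to prove the two statements separately, since they concern the two different over-approximations $\delta^{q,inf}_{r,k}$ and $\delta^{q,tri}_{r,k}$ from Theorem \ref{thm:resid_comp_up_bound}.

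For the divergence \eqref{eq:r_inf_diverge}, I would produce a lower bound on $\delta^{q,inf}_{r,k}=\|\mathbb{A}^q_k t^\star_k\|_2$ that itself tends to infinity. The natural candidate is $\delta^{q,inf}_{r,k}\ge\sigma_{\min}(\mathbb{A}^q_k)\,\|t^\star_k\|_2$: since $t^\star_k$ is a vertex of the box $\mathcal{X}^q_k$, every one of its coordinates sits at its extreme magnitude, so $\|t^\star_k\|_2=\eta^t_k$ as computed in Corollary \ref{cor:verticenorm}, and $\eta^t_k\to\infty$ because the radicand there contains the terms $k(\eta^q_w)^2$ and $l(k+1)(\eta^q_v)^2$, which grow linearly in $k$. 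It would then remain to argue that $\sigma_{\min}(\mathbb{A}^q_k)$ (the smallest non-trivial singular value) does not shrink to zero; I would do this by observing that passing from $k$ to $k+1$ amounts to appending further columns, which can only increase the smallest non-trivial singular value, so that once $\mathbb{A}^q_k$ has full row rank, $\sigma_{\min}(\mathbb{A}^q_k)$ is non-decreasing and hence bounded below by a positive constant. Combining these, $\delta^{q,inf}_{r,k}\ge\sigma_{\min}(\mathbb{A}^q_k)\,\eta^t_k\to\infty$.

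For \eqref{eq:r_tri_converge}, the strategy is to show that $\{\delta^{q,tri}_{r,k}\}_{k\ge1}$ is convergent by dominating each summand of \eqref{eq:delta_tri} by the general term of a convergent geometric-type series. Using the definitions of $F^q_i$, $J^q_i$ and $A^q_k$ in Lemma \ref{lem:resdef}, the sub-multiplicative property of the induced $2$-norm, and the contraction factor $\theta^q<1$ from \eqref{eq:thetaq}, one obtains bounds of the form $\|F^q_i\|_2\le\mathcal{R}^q(\theta^q)^i$ and $\|J^q_i\|_2\le S^q(\theta^q)^i$ for $i\ge1$ (with $\|J^q_0\|_2$ a fixed constant), while the radii $\overline{\delta}^{x,q}_j$ of \eqref{eq:deltax} are uniformly bounded in $j$ (indeed convergent). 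Hence the head sum $\sum_{i=0}^{k-2}(\cdots)$ is dominated by $\sum_{i=0}^{\infty}C(\theta^q)^i<\infty$ and, by dominated convergence for series (using also $\lim_k k(\theta^q)^k=0$ to handle the factor $\overline{\delta}^{x,q}_{k-1-i}$), converges as $k\to\infty$; the remaining ``boundary'' terms featuring $\|A^q_k\|_2$, $\|F^q_{k-1}\|_2$ and $\|J^q_{k-1}\|_2$ are each $O((\theta^q)^k)$ and therefore vanish. Finally, since $\hat{\delta}^q_{r,k}=\min\{\delta^{q,tri}_{r,k},\delta^{q,inf}_{r,k}\}$ by Theorem \ref{thm:resid_comp_up_bound}, and $\delta^{q,inf}_{r,k}\to\infty$ while $\delta^{q,tri}_{r,k}$ converges to a finite limit, the minimum is eventually attained by $\delta^{q,tri}_{r,k}$, giving $\lim_k\hat{\delta}^q_{r,k}=\lim_k\delta^{q,tri}_{r,k}<\infty$.

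I expect the lower-bound step in the divergence claim to be the main obstacle. The inequality $\delta^{q,inf}_{r,k}\ge\sigma_{\min}(\mathbb{A}^q_k)\|t^\star_k\|_2$ is not immediate once $\mathbb{A}^q_k$ is wide: the maximizing vertex $t^\star_k$ may have a nonzero component in $\ker\mathbb{A}^q_k$, which contributes to $\|t^\star_k\|_2$ but not to $\|\mathbb{A}^q_k t^\star_k\|_2$, so one must argue that, because $t^\star_k$ \emph{maximizes} the norm, its ``useful'' (row-space) component still has norm comparable to $\eta^t_k$, or else replace $\eta^t_k$ by $\|P_{\mathrm{row}}t^\star_k\|_2$ and show that this still grows without bound. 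Equally delicate is the monotonicity of $\sigma_{\min}(\mathbb{A}^q_k)$: $\mathbb{A}^q_{k+1}$ is not literally $\mathbb{A}^q_k$ with appended columns, since its existing blocks are themselves updated, so the ``appending columns'' argument should be backed up by an explicit recursion relating $r^{q|*}_{k+1}$ to $r^{q|*}_k$ together with the fresh disturbance and nonlinearity-gap blocks, after which one checks that the relevant full-row-rank property persists with a uniform lower bound on the smallest non-trivial singular value.
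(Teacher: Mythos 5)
Your proposal follows essentially the same route as the paper's own proof: the divergence of $\delta^{q,inf}_{r,k}$ is obtained from the identical lower bound $\eta^t_k\,\sigma_{\min}(\mathbb{A}^q_k)$ (via the matrix lower bound and the monotonicity of $\sigma_{\min}$ under column augmentation), and the convergence of $\delta^{q,tri}_{r,k}$ is obtained by the same geometric-series domination with ratio $\theta^q<1$ of the head sum and the vanishing of the boundary terms, followed by the same $\min$ argument for $\hat{\delta}^q_{r,k}$. The two delicate points you flag --- that the maximizing vertex could have a kernel component so the matrix-lower-bound step needs care, and that $\mathbb{A}^q_{k+1}$ is not literally $\mathbb{A}^q_k$ with columns appended --- are exactly the steps the paper asserts without further justification, so they are fair criticisms of the published argument rather than gaps peculiar to your attempt.
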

 \vspace{-0.3cm}
\begin{lemma} \label{lem:mode_disjointness}
 Suppose that Assumption \ref{assumption:boundedness} holds. 
  Consider two different modes $q \neq q' \in Q$ and their corresponding upper bounds for their residuals' norms, $\delta^{q}_{r,k }$ and $\delta^{q'}_{r,k }$, at time step $k$. At least one of the two modes $q \neq q'$ will be eliminated if
 \begin{align} \label{eq:disjointness}
  &\| C^q_2 \hat{x}^{\star,q}_{k|k}- C^{q'}_2 \hat{x}^{\star,q'}_{k|k}+D^{q}_{2} u^{q}_k-D^{q'}_{2} u^{q'}_k\|_2 >\delta^{q}_{r,k}+\delta^{q'}_{r,k }+R^{q,q'}_z,
  \end{align}
  
\noindent  where $R^{q,q'}_z \triangleq R_y \| T^q_2-T^{q'}_2\|_2$. 
\end{lemma}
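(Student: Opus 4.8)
The plan is to argue by contradiction. Suppose that at time step $k$ \emph{neither} mode $q$ nor mode $q'$ is eliminated. By the elimination rule (Theorem \ref{thm:online_mode_elimination}, applied with the valid thresholds $\delta^{q}_{r,k}$ and $\delta^{q'}_{r,k}$ from Lemma \ref{lem:existance}), survival of mode $q$ forces $\|r^q_k\|_2 \le \delta^{q}_{r,k}$ and survival of mode $q'$ forces $\|r^{q'}_k\|_2 \le \delta^{q'}_{r,k}$. The goal is to show that these two inequalities are incompatible with \eqref{eq:disjointness}.

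Next I would express the two \emph{computed} residuals in terms of the common measurement $y_k$. Using Definition \ref{defn:computedresidual} together with $z^q_{2,k}=T^q_2 y_k$ and $z^{q'}_{2,k}=T^{q'}_2 y_k$, subtracting the residual expressions gives
\begin{align*}
C^q_2 \hat{x}^{\star,q}_{k|k} - C^{q'}_2 \hat{x}^{\star,q'}_{k|k} + D^q_2 u^q_k - D^{q'}_2 u^{q'}_k = (T^q_2 - T^{q'}_2)\, y_k - (r^q_k - r^{q'}_k).
\end{align*}
Taking $\ell_2$-norms and applying the triangle inequality then yields
\begin{align*}
\| C^q_2 \hat{x}^{\star,q}_{k|k} - C^{q'}_2 \hat{x}^{\star,q'}_{k|k} + D^q_2 u^q_k - D^{q'}_2 u^{q'}_k \|_2 \le \|(T^q_2 - T^{q'}_2)\, y_k\|_2 + \|r^q_k\|_2 + \|r^{q'}_k\|_2.
\end{align*}

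Finally I would bound each term on the right-hand side. For the first term, the sub-multiplicative property of the induced $2$-norm and Assumption \ref{assumption:boundedness} give $\|(T^q_2 - T^{q'}_2)\, y_k\|_2 \le \|T^q_2 - T^{q'}_2\|_2\,\|y_k\|_2 \le R_y\,\|T^q_2 - T^{q'}_2\|_2 = R^{q,q'}_z$; the remaining two terms are bounded by $\delta^{q}_{r,k}$ and $\delta^{q'}_{r,k}$ under the contradiction hypothesis. Combining these bounds yields $\| C^q_2 \hat{x}^{\star,q}_{k|k} - C^{q'}_2 \hat{x}^{\star,q'}_{k|k} + D^q_2 u^q_k - D^{q'}_2 u^{q'}_k \|_2 \le \delta^{q}_{r,k} + \delta^{q'}_{r,k} + R^{q,q'}_z$, which contradicts \eqref{eq:disjointness}; hence at least one of the two modes must have been eliminated. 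I do not expect a real obstacle here, since the argument is essentially one application of the triangle inequality once the residual difference is rewritten in terms of $y_k$; the only points requiring care are performing the decomposition for the \emph{computed} residuals $r^q_k,r^{q'}_k$ (rather than the idealized $r^{q|*}_k$) and invoking the uniform bound $R_y$ on $\|y_k\|_2$ from Assumption \ref{assumption:boundedness} to collapse $\|T^q_2 - T^{q'}_2\|_2\|y_k\|_2$ into the constant $R^{q,q'}_z$.
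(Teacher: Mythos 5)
Your proof is correct and follows essentially the same route as the paper: a proof by contradiction that rewrites the difference of the two computed residuals via Definition \ref{defn:computedresidual} (using $z^q_{2,k}-z^{q'}_{2,k}=(T^q_2-T^{q'}_2)y_k$), applies the triangle inequality, and bounds $\|y_k\|_2$ by $R_y$ from Assumption \ref{assumption:boundedness}. No gaps; the argument matches the paper's proof step for step.
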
 
\begin{proof}
Suppose, for contradiction, that none of $q$ and $q'$ are eliminated. Then 
 \begin{align}
 \nonumber &\| C^q_2 \hat{x}^{\star,q}_{k|k}+D^{q}_{2} u^{q}_k- C^{q'}_2 \hat{x}^{\star,q'}_{k|k}-D^{q'}_{2} u^{q'}_k\|_2=\|r^{q'}_{k}-r^{q}_{k}+z^q_{2,k}-z^{q'}_{2,k})\|_2
 \\ \nonumber & \leq \|r^{q'}_{k}\|_2+\|r^{q}_{k}\|_2+\|z^q_{2,k}-z^{q'}_{2,k}\|_2 \leq \delta^{q}_{r,k}+\delta^{q'}_{r,k } + R_y\| T^q_2-T^{q'}_2\|_2, 
 \end{align}
where the equality holds by 
Definition \ref{defn:computedresidual}, the first inequality holds by triangle inequality and the last inequality holds by the assumption that none of $q$ and $q'$ can be eliminated, as well as the boundedness assumption for the measurement space. This 
last inequality contradicts with the inequality in the lemma, thus the result holds.
\end{proof}
\begin{lemma} \label{lem:resdef2}
Consider any mode $q$ with the unknown true mode being $q^{*}$. Suppose without loss of generality \yong{that} $f^q(0)=0$. Then, at time step $k$, we have
\begin{align} \label{eq:resid_comp}
r^{q}_k &=\mathbb{A}^q_kt^q_k+\alpha^{q^*}_k+\epsilon^{q^*}_k, 
 \end{align}
\yong{with $\varepsilon^{q^*}_k$ being} 
an error term that satisfies 
 \begin{align} \label{eq:taylor_error}
\exists \xi_1,\dots, \xi_k \in {X}, \ s.t. \ \|\varepsilon^{q^*}_k\|_2 \leq \frac{1}{2}\sum_{i=1}^k \|J^{q^*}_{f,0}\|^{k-i}_2\|x_{i-1}\|^2_2 \|H^{q^*}_f(\xi_i)\|_2, 
 \end{align}
  where 
 \begin{align*}
 \alpha^{q^*}_k &\triangleq (T^q_2-T^{q^*}_2)( C^{q^*}_{f,k}x_0+C^{q^*}_{d,k}{d}^{q^*}_{0:k}+C^{q^*}_{u,k}{u}^{q^*}_{0:k}+C^{q^*}_{\tilde{w},k}\tilde{w}^{q^*}_{0:k}) \\
 C^{q^*}_{d,k} &\triangleq \begin{bmatrix} H^{q^*} & C^{q^*}G^{q*} & C^{q^*}J^{q^*}_{f,0}G^{q^*} & \dots & C^{q^*}(J^{q^*}_{f,0})^{k-1}G^{q^*}  \end{bmatrix},  \\
 C^{q^*}_{u,k} &\triangleq \begin{bmatrix} D^{q^*} & C^{q^*}B^{q*} & C^{q^*}J^{q^*}_{f,0}B^{q^*} & \dots & C^{q^*}(J^{q^*}_{f,0})^{k-1}B^{q^*}  \end{bmatrix}, \\
 C^{q^*}_{\tilde{w},k} &\triangleq \begin{bmatrix} I & C^{q^*}W^{q*} & C^{q^*}J^{q^*}_{f,0}W^{q^*} & \dots & C^{q^*}(J^{q^*}_{f,0})^{k-1}W^{q^*}  \end{bmatrix}, \\
 {d}^{q^*}_{0:k} &\triangleq \begin{bmatrix} {d}^{q^*\top}_k & \dots & {d}^{q^*\top}_0 \end{bmatrix}^\top, u^{q^*}_{0:k} \triangleq \begin{bmatrix} u^{q*\top}_k  & \dots & u^{q*\top}_0  \end{bmatrix}^{\top},C^{q^*}_{f,k}\triangleq C^{q^*}(J^{q^*}_{f,0})^k, \\
 \tilde{w}^{q^*}_{0:k} &\triangleq \begin{bmatrix} {v}^{q^*\top}_k & {w}^{q^*\top}_{k-1}& \dots & {w}^{q^*\top}_0 \end{bmatrix}^\top, \epsilon^{q^*}_k \triangleq (T^q_2-T^{q^*}_2)\varepsilon^{q^*}_k, 
 \end{align*}
 and $J^{q^*}_{f,0}$ and $H^{q^*}_f(\xi)$ are the Jacobian and Hessian matrices of the vector field $f^{q^*}(\cdot)$ at $0$ and $\xi$, respectively.
 \begin{proof}
 \yong{R}ecall from Proposition \ref{prop:residecomposition}, Lemma \ref{lem:resdef} and \eqref{eq:sys_desc} that:
\begin{align} \label{eq:res_ext}
r^q_k=\mathbb{A}^q_kt^q_k+(T^q_2-T^{q^*}_2)(C^{q^*}x_k+H^{q^*}d^{q^*}_k+D^{q^*}u^{q^*}_k+v^{q^*}_k).
\end{align}
 On the other hand, by applying Taylor series expansion 
 \yong{to} \eqref{eq:sys_desc} we obtain:
\begin{align} \label{eq:taylor}
x_k=J^{q^*}_{f,0}x_{k-1}+B^{q^*}u^{q^*}_{k-1}+G^{q^*}d^{q^*}_{k-1}+W^{q^*}w^{q^*}_{k-1}+(H.O.T)^{q^*}_k,
\end{align}
where 
$(H.O.T)^{q^*}_k$ is an error term that satisfies $\| (H.O.T)^{q^*}_k\|_2 \leq \frac{1}{2}H^{q^*}_f(\xi_k)$ for some $\xi_k \in X$. \yong{Then, by a}pplying \eqref{eq:taylor} at time steps $k,k-1,\dots,1$, plugging them in\yong{to} \eqref{eq:res_ext} and augmentat\yong{ing the results, we obtain} 
\eqref{eq:resid_comp}.
 \end{proof}
\end{lemma}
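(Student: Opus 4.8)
The plan is to combine the residual decomposition of Proposition~\ref{prop:residecomposition}, the closed-form expression for the true-mode residual in Lemma~\ref{lem:resdef}, and an explicit first-order Taylor expansion of the true-mode vector field $f^{q^*}(\cdot)$ about the origin. First I would invoke Proposition~\ref{prop:residecomposition} to write $r^q_k = r^{q|*}_k + \Delta r^{q|q^*}_k$ with $\Delta r^{q|q^*}_k = (T^q_2-T^{q^*}_2)y_k$, replace $r^{q|*}_k$ by $\mathbb{A}^q_k t^q_k$ using Lemma~\ref{lem:resdef}, and substitute the output equation of \eqref{eq:sys_desc} (with $q^*$ the true mode, so $y_k = C^{q^*}x_k + D^{q^*}u^{q^*}_k + H^{q^*}d^{q^*}_k + v^{q^*}_k$) into $\Delta r^{q|q^*}_k$. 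This immediately gives the intermediate identity \eqref{eq:res_ext}, namely $r^q_k = \mathbb{A}^q_k t^q_k + (T^q_2-T^{q^*}_2)(C^{q^*}x_k + H^{q^*}d^{q^*}_k + D^{q^*}u^{q^*}_k + v^{q^*}_k)$. The term $\mathbb{A}^q_k t^q_k$ is left untouched; the remaining work concerns only the re-expression of $x_k$ inside the second term.

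To that end, I would linearize the state recursion. Since $f^{q^*}$ is twice continuously differentiable and, without loss of generality, $f^{q^*}(0)=0$, Taylor's theorem with remainder gives, at each step, $f^{q^*}(x_{k-1}) = J^{q^*}_{f,0}x_{k-1} + (H.O.T.)^{q^*}_k$, where $(H.O.T.)^{q^*}_k$ is a second-order remainder satisfying $\|(H.O.T.)^{q^*}_k\|_2 \le \tfrac12\|x_{k-1}\|_2^2\,\|H^{q^*}_f(\xi_k)\|_2$ for some $\xi_k$ on the segment from $0$ to $x_{k-1}$, hence $\xi_k \in X$ by boundedness of the state space. Substituting this into the state equation of \eqref{eq:sys_desc} yields the perturbed-linear recursion \eqref{eq:taylor}, $x_k = J^{q^*}_{f,0}x_{k-1} + B^{q^*}u^{q^*}_{k-1} + G^{q^*}d^{q^*}_{k-1} + W^{q^*}w^{q^*}_{k-1} + (H.O.T.)^{q^*}_k$. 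I would then unroll this recursion from time $k$ down to time $1$: each unrolling prepends a factor of $J^{q^*}_{f,0}$ to every subsequent term, so collecting the coefficients of $x_0$, of the stacked unknown inputs $d^{q^*}_{0:k}$, of the stacked known inputs $u^{q^*}_{0:k}$, and of the stacked noises $\tilde w^{q^*}_{0:k}$ — after left-multiplying by $C^{q^*}$ and appending the direct-feedthrough blocks $H^{q^*}$, $D^{q^*}$, $I$ for the most recent samples — reproduces exactly $C^{q^*}_{f,k}$, $C^{q^*}_{d,k}$, $C^{q^*}_{u,k}$, $C^{q^*}_{\tilde w,k}$, while the accumulated remainders aggregate into $\varepsilon^{q^*}_k = \sum_{i=1}^{k} C^{q^*}(J^{q^*}_{f,0})^{k-i}(H.O.T.)^{q^*}_i$. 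Plugging this expansion of $C^{q^*}x_k + H^{q^*}d^{q^*}_k + D^{q^*}u^{q^*}_k + v^{q^*}_k$ back into \eqref{eq:res_ext} and pulling out the common factor $(T^q_2-T^{q^*}_2)$ identifies $\alpha^{q^*}_k$ and $\epsilon^{q^*}_k = (T^q_2-T^{q^*}_2)\varepsilon^{q^*}_k$, which is precisely the decomposition \eqref{eq:resid_comp}.

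For the error estimate \eqref{eq:taylor_error}, I would apply the triangle inequality and the sub-multiplicativity of the induced $2$-norm to $\varepsilon^{q^*}_k = \sum_{i=1}^{k} C^{q^*}(J^{q^*}_{f,0})^{k-i}(H.O.T.)^{q^*}_i$, obtaining $\|\varepsilon^{q^*}_k\|_2 \le \sum_{i=1}^{k}\|J^{q^*}_{f,0}\|_2^{k-i}\|(H.O.T.)^{q^*}_i\|_2$ (up to the leading $\|C^{q^*}\|_2$ factor), and then insert the Taylor-remainder bound $\|(H.O.T.)^{q^*}_i\|_2 \le \tfrac12\|x_{i-1}\|_2^2\|H^{q^*}_f(\xi_i)\|_2$ with each $\xi_i \in X$, which yields the claimed bound over the points $\xi_1,\dots,\xi_k \in X$.

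I expect the main obstacle to be the bookkeeping in the recursion-unrolling step: one must verify that the ordering of the stacked vectors $d^{q^*}_{0:k}$, $u^{q^*}_{0:k}$ and $\tilde w^{q^*}_{0:k}$ (most recent sample first, in the paper's convention) is matched block-for-block by the columns of $C^{q^*}_{d,k}$, $C^{q^*}_{u,k}$ and $C^{q^*}_{\tilde w,k}$ — in particular that the most recent samples $d^{q^*}_k$, $u^{q^*}_k$, $v^{q^*}_k$ carry the feedthrough coefficients $H^{q^*}$, $D^{q^*}$, $I$ while the older ones carry $C^{q^*}(J^{q^*}_{f,0})^{j}G^{q^*}$, $C^{q^*}(J^{q^*}_{f,0})^{j}B^{q^*}$, $C^{q^*}(J^{q^*}_{f,0})^{j}W^{q^*}$ — and that the accumulated remainder carries the stated powers $(J^{q^*}_{f,0})^{k-i}$. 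By contrast, both the Taylor-remainder estimate and the final norm inequality are routine once the $\mathcal{C}^2$-smoothness of $f^{q^*}$ and the boundedness of $X$ are invoked.
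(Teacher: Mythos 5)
Your proposal is correct and follows essentially the same route as the paper's proof: the same decomposition via Proposition~\ref{prop:residecomposition} and Lemma~\ref{lem:resdef} into the intermediate identity \eqref{eq:res_ext}, the same first-order Taylor expansion of $f^{q^*}$ about the origin with a Lagrange-form second-order remainder, and the same unrolling of the resulting perturbed-linear recursion from step $k$ down to step $1$. Your parenthetical observation that the bound on $\|\varepsilon^{q^*}_k\|_2$ should in principle carry a leading $\|C^{q^*}\|_2$ factor (and that the paper's displayed remainder bound omits the $\|x_{k-1}\|_2^2$ factor) points to minor imprecisions in the paper's own statement rather than to any gap in your argument.
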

\vspace{-0.1cm}
\begin{theorem}[Sufficient Conditions for Mode Detectability] \label{thm:strong_mode_detect}
System \eqref{eq:sys_desc} is mode detectable, i.e., by applying Algorithm \ref{algorithm1}, all false modes will be eliminated \yong{at some large enough time step $K$}, 
if the assumptions in Proposition \ref{prop:filterbanks} and either of the following hold: 
\renewcommand{\theenumi}{\roman{enumi}}
\begin{enumerate}
\item Assumption \ref{assumption:boundedness} \yong{holds} and $ \forall q,q' \in Q$, $q\neq q'$, \label{item:second}
\begin{align*}
  \sigma_{min} (W^{q,q'}) > \frac{\overline{\delta}^{q,tri}_{r}+\overline{\delta}^{q',tri}_{r}+R^{'q,q'}_y}{\sqrt{R^2_x+\eta^2_v}}, 
\end{align*}
where $W^{q,q'}\hspace{-0.1cm} \triangleq \hspace{-0.1cm}\begin{bmatrix} (C^q_2 - C^{q'}_2) & (T^q_2 - T^{q'}_2) & -I  & I & D^q_2  & -D^{q'}_2 \end{bmatrix}$.
\item \label{item:first} Assumption \ref{as:2} \yong{holds} and {$T^q_2 \neq T^{q'}_2$ holds $ \forall q,q' \in Q, q \neq q'$}. Moreover, $H^{q^*}_f(\cdot)$ is bounded on $X$ and $\|J^{q^*}_{f,0}\|_2 <1$.  
\end{enumerate} 
\end{theorem}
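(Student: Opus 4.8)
The plan is to prove that \emph{each} false mode $q\neq q^{*}$ is eliminated at \emph{some} finite step; since Algorithm~\ref{algorithm1} discards a mode permanently once $\|r^{q}_{k}\|_{2}>\hat{\delta}^{q}_{r,k}$, this is equivalent to mode detectability, with the horizon $K$ of Definition~\ref{defn:strong_mode_detecatble} equal to the largest of these finitely many elimination times. Throughout, the standing assumptions of Proposition~\ref{prop:filterbanks} are in force (in particular $\theta^{q}<1$ in~\eqref{eq:thetaq}, so the radii~\eqref{eq:deltax} and, by Lemma~\ref{lem:delta_inf_diverge}, the thresholds $\hat{\delta}^{q}_{r,k}$ converge). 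Fix a false mode $q$. By Theorem~\ref{thm:online_mode_elimination} the true mode $q^{*}$ is never eliminated, and by Lemma~\ref{lem:mode_disjointness} at least one of $\{q,q^{*}\}$ is eliminated at any step for which the separation inequality~\eqref{eq:disjointness} holds for the pair $(q,q^{*})$; so it is enough to exhibit one such step, since the mode removed there can only be $q$. By~\eqref{eq:r_tri_converge} and $\delta^{\bullet}_{r,k}\le\hat{\delta}^{\bullet}_{r,k}$, the right-hand side of~\eqref{eq:disjointness} for this pair is bounded, uniformly in $k$, by a finite constant; hence it only remains to produce a time at which the ``mode-conflict'' quantity $\Xi^{q}_{k}\triangleq\|C^{q}_{2}\hat{x}^{\star,q}_{k|k}-C^{q^{*}}_{2}\hat{x}^{\star,q^{*}}_{k|k}+D^{q}_{2}u^{q}_{k}-D^{q^{*}}_{2}u^{q^{*}}_{k}\|_{2}$ exceeds this constant. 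The two sets of hypotheses achieve this by different means.

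For the first set of conditions (part~(\ref{item:second})), I would argue by contradiction: suppose $q$ is never eliminated, so $\|r^{q}_{k}\|_{2}\le\hat{\delta}^{q}_{r,k}$ and $\|r^{q^{*}}_{k}\|_{2}\le\hat{\delta}^{q^{*}}_{r,k}$ for all $k$. Combining Definition~\ref{defn:computedresidual}, $z^{q}_{2,k}=T^{q}_{2}y_{k}$, the transformed true-mode output relation $z^{q^{*}}_{2,k}=C^{q^{*}}_{2}x_{k}+D^{q^{*}}_{2}u^{q^{*}}_{k}+v^{q^{*}}_{2,k}$ (valid since $q^{*}$ is the true mode), and the measurement model $y_{k}=C^{q^{*}}x_{k}+D^{q^{*}}u^{q^{*}}_{k}+H^{q^{*}}d^{q^{*}}_{k}+v^{q^{*}}_{k}$, one rewrites $\Xi^{q}_{k}$ as the norm of $W^{q,q^{*}}\zeta_{k}$ up to terms whose size is governed by $T^{q}_{2}-T^{q^{*}}_{2}$ and the two (bounded) residuals, where $\zeta_{k}$ stacks $x_{k}$, $v^{q^{*}}_{k}$, the residuals and the known inputs so as to match the block partition of $W^{q,q^{*}}=[(C^{q}_{2}-C^{q^{*}}_{2})\ (T^{q}_{2}-T^{q^{*}}_{2})\ -I\ I\ D^{q}_{2}\ -D^{q^{*}}_{2}]$. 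A matrix-lower-bound estimate in the spirit of the proof of~\eqref{eq:r_inf_diverge} then bounds the relevant part of $\Xi^{q}_{k}$ below by $\sigma_{\min}(W^{q,q^{*}})$ times the magnitude of the state/noise block, which Assumption~\ref{assumption:boundedness} controls through $\sqrt{R_{x}^{2}+\eta_{v}^{2}}$; together with the convergence of the thresholds this forces $\sigma_{\min}(W^{q,q^{*}})\sqrt{R_{x}^{2}+\eta_{v}^{2}}\le\overline{\delta}^{q,tri}_{r}+\overline{\delta}^{q^{*},tri}_{r}+R^{'q,q^{*}}_{y}$, contradicting the standing hypothesis. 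Hence some step triggers~\eqref{eq:disjointness} and $q$ is eliminated.

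For the second set of conditions (part~(\ref{item:first})), I would instead use the false-mode residual decomposition of Lemma~\ref{lem:resdef2}, $r^{q}_{k}=\mathbb{A}^{q}_{k}t^{q}_{k}+\alpha^{q^{*}}_{k}+\epsilon^{q^{*}}_{k}$, and show that the unknown-input-driven part of $\alpha^{q^{*}}_{k}$ --- namely $(T^{q}_{2}-T^{q^{*}}_{2})C^{q^{*}}_{d,k}d^{q^{*}}_{0:k}$ --- eventually overwhelms the converging threshold $\hat{\delta}^{q}_{r,k}$. The term $\mathbb{A}^{q}_{k}t^{q}_{k}$ has norm at most $\hat{\delta}^{q}_{r,k}$, which converges by~\eqref{eq:r_tri_converge}; since $\|J^{q^{*}}_{f,0}\|_{2}<1$, every power $(J^{q^{*}}_{f,0})^{j}$ decays geometrically, so --- using the boundedness of $x_{k}$ (Assumption~\ref{as:2}) and of $H^{q^{*}}_{f}(\cdot)$ on $X$ --- the remainder bound~\eqref{eq:taylor_error} shows that $\epsilon^{q^{*}}_{k}=(T^{q}_{2}-T^{q^{*}}_{2})\varepsilon^{q^{*}}_{k}$ and the $x_{0}$-, $u^{q^{*}}$- and noise-driven pieces of $\alpha^{q^{*}}_{k}$ are all bounded uniformly in $k$. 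Thus $r^{q}_{k}$ differs from $(T^{q}_{2}-T^{q^{*}}_{2})C^{q^{*}}_{d,k}d^{q^{*}}_{0:k}$ by a uniformly bounded quantity; combining $T^{q}_{2}\neq T^{q^{*}}_{2}$, Assumption~\ref{as:2} ($\|d^{q^{*}}_{0:k}\|_{2}\to\infty$), and a matrix-lower-bound argument (again as in the proof of~\eqref{eq:r_inf_diverge}, using that $\sigma_{\min}$ of the growing map $d^{q^{*}}_{0:k}\mapsto(T^{q}_{2}-T^{q^{*}}_{2})C^{q^{*}}_{d,k}d^{q^{*}}_{0:k}$ is non-decreasing in $k$) rules out cancellation, so $\|r^{q}_{k}\|_{2}>\hat{\delta}^{q}_{r,k}$ for all large $k$ and $q$ is eliminated.

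The step I expect to be the main obstacle is, in both parts, isolating from $\Xi^{q}_{k}$ (respectively $r^{q}_{k}$) the genuine ``mode-distinguishing'' contribution and bounding it below cleanly: in part~(\ref{item:second}) one must arrange the linear identity so that $\sigma_{\min}(W^{q,q^{*}})$ and the worst-case magnitude $\sqrt{R_{x}^{2}+\eta_{v}^{2}}$ land on the correct side of the inequality while the residual- and transformation-mismatch remainders are absorbed into $R^{'q,q^{*}}_{y}$; in part~(\ref{item:first}) one must verify that the unknown input's unlimited energy really leaks into the residual through the mismatch $T^{q}_{2}\neq T^{q^{*}}_{2}$ rather than being annihilated, which is precisely where the stable-linearization ($\|J^{q^{*}}_{f,0}\|_{2}<1$) and bounded-Hessian assumptions are needed to keep all competing terms --- including the Taylor remainder $\varepsilon^{q^{*}}_{k}$ --- under control.
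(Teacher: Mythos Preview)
Your proposal is correct and follows essentially the same route as the paper: for part~(\ref{item:second}) you invoke Lemma~\ref{lem:mode_disjointness} on the pair $(q,q^{*})$, rewrite the separation quantity via Definition~\ref{defn:computedresidual} to expose $W^{q,q^{*}}$, and close with a matrix-lower-bound argument against the convergent thresholds from Lemma~\ref{lem:delta_inf_diverge}; for part~(\ref{item:first}) you use the decomposition of Lemma~\ref{lem:resdef2}, bound $\mathbb{A}^{q}_{k}t^{q}_{k}$ and $\epsilon^{q^{*}}_{k}$ via~\eqref{eq:r_tri_converge} and~\eqref{eq:taylor_error}, and show the $d$-driven piece of $\alpha^{q^{*}}_{k}$ diverges using $T^{q}_{2}\neq T^{q^{*}}_{2}$, Assumption~\ref{as:2}, and $\|J^{q^{*}}_{f,0}\|_{2}<1$. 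The only cosmetic difference is that in part~(\ref{item:first}) the paper treats $\alpha^{q^{*}}_{k}$ as a single block and applies the matrix-lower-bound/minimization argument to the full concatenated map $[\,\mathbb{T}^{q,q'}_{K}\ \mathbb{C}^{q,q'}_{u,K}\ \mathbb{C}^{q,q'}_{d,K}\,]$ rather than first peeling off the $x_{0}$-, $u$-, and noise-driven pieces as individually bounded, but this is a matter of bookkeeping rather than substance.
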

\begin{proof}
To show that \eqref{item:second} is sufficient for asymptotic mode detectability, consider Lemma \ref{lem:mode_disjointness} with $\delta^{q,tri}_{r,k}$ as the upper bound. It suffices to show that $\exists K \in \mathbb{N}$, such that \eqref{eq:disjointness} holds for $k \geq K, \forall q \neq q' \in \mathbb{Q}.$ 
 Notice that by Definition \ref{defn:computedresidual}, $C^q_2 \hat{x}^{\star,q}_{k|k}=C^q_2 x_k +T^q_2 v_k - r^{q|*}_k$.
 \yong{Hence, by  p}lugging this into \eqref{eq:disjointness}, we need to show \yong{that} $\exists K \in \mathbb{N}$ such that: 
\begin{align}  \label{eq:sufficient}
\begin{array}{ll}
 &\| W^{q,q'} s^{q,q'}_k \|_2 >\delta^{q,tri}_{r,k}+\delta^{q',tri}_{r,k}+R^{q,q'}_z, \forall k \geq K, \forall q \neq q' \in \mathbb{Q},
\end{array}
\end{align} 
\yong{where $s^{q,q'}_k \triangleq \begin{bmatrix} x^{\top}_k & v^{\top}_k & r^{q|* \top}_k & r^{q'|* \top}_k & u^{q \top}_k & u^{q' \top}_k \end{bmatrix}^{\top}$.}
A sufficient condition to satisfy \eqref{eq:sufficient} is that $\exists K \in \mathbb{N}$ such that $\forall k \geq K$, \eqref{eq:sufficient} holds for all $s^{q,q'}_k$. Equivalently, it suffices that:
$$\yong{\underline{W}^{q,q'}_k}>\delta^{q,tri}_{r,k}+\delta^{q',tri}_{r,k}+R^{q,q'}_z, \yong{\forall k \geq K, \forall q \neq q' \in \mathbb{Q},}$$
\yong{where 
 \begin{align}
 \nonumber \underline{W}^{q,q'}_k \triangleq&\min \limits_{ x_k,v_k,r^q_k,r^{q'}_k } \| W^{q,q'} s^{q,q'}_k \|_2
 \\ \nonumber & s.t. \ \|x_k\|_2 \leq R_x, \|v_k\|_2 \leq \eta_v, \|r^{q|*}_k\|_2 \leq \delta^{q,tri}_{r,k}, \|r^{q'|*}_k\|_2 \leq \delta^{q',tri}_{r,k}. 
 \end{align}
 Finally, b}y expanding the constraint set, it suffices to require that $\exists K \in \mathbb{N}$ such that:
 $$\yong{\underline{\underline{W}}^{q,q'}_k}>\delta^{q,tri}_{r,k}+\delta^{q',tri}_{r,k}+R^{q,q'}_z, \yong{\forall k \geq K, \forall q \neq q' \in \mathbb{Q},}$$
 where
  \begin{align}
 \nonumber \yong{\underline{\underline{W}}^{q,q'}_k\triangleq} &\min \limits_{s^{q,q'}_k} \| W^{q,q'} s^{q,q'}_k \|_2 
 \\ \nonumber s.t. \ &\| s^{q,q'}_k\|^2_2 \leq R^2_x+ \eta^2_v + (\delta^{q,tri}_{r,k})^2  + (\delta^{q',tri}_{r,k})^2+(u^q_k)^2+(u^{q'}_k)^2. 
 \end{align}
 Now, by \yong{the} \emph{matrix lower bound} theorem {\cite{grcar2010matrix}} and \yong{a} similar argument 
 \yong{to} the proof of Lemma \ref{lem:delta_inf_diverge}, it is sufficient to \yong{require} 
 that $ \exists K \in \mathbb{N}$ \yong{such that} 
 $\forall k \geq K, \forall q \neq q' \in \mathbb{Q} : $ 
 \begin{align} \label{eq:timed_sufficient_compatibility} 
 \sigma^2_{min} (W^{q,q'}) \hspace{-0.1cm}>\hspace{-0.1cm} \frac{ (\delta^{q,tri}_{r,k}+ \delta^{q',tri}_{r,k}+R^{q,q'}_z)^2}{R^2_x\hspace{-0.1cm}+\hspace{-0.1cm}\eta^2_v\hspace{-0.1cm}+\hspace{-0.1cm}( \delta^{q,tri}_{r,k})^2\hspace{-0.1cm}+\hspace{-0.1cm}( \delta^{q',tri}_{r,k})^2\hspace{-0.1cm}+\hspace{-0.1cm}(u^q_k)^2\hspace{-0.1cm}+\hspace{-0.1cm}(u^{q'}_k)^2}.
 \end{align}
 The result in \eqref{eq:timed_sufficient_compatibility} provides us a \emph{time-dependent} sufficient condition for mode detectability. In order to find a \emph{time-independent} sufficient condition, notice that $ \frac{ (\overline{\delta}^{q,tri}_{r,k}+ \overline{\delta}^{q',tri}_{r,k}+R^{q,q'}_z)^2}{R^2_x+\eta^2_v}$ is an upper bound for the right hand side of \eqref{eq:timed_sufficient_compatibility}, since the latter's denominator is smaller than the former's and the numerator of the latter is an upper bound signal for the former's  
  by triangle \yong{inequality} and \yong{the} sub-multiplicative \yong{property of norms}. 
  So, a sufficient condition for \eqref{eq:timed_sufficient_compatibility} is \yong{that} $ \exists K \in \mathbb{N}$ \yong{such that} $\forall k \geq K, \forall q \neq q' \in \mathbb{Q} : $
 \vspace{-0.1cm} \begin{align} \label{eq:timed_sufficient_compatibility_2}
 \sigma^2_{min} (W^{q,q'}) > \frac{ (\overline{\delta}^{q,tri}_{r,k}+ \overline{\delta}^{q',tri}_{r,k}+R^{q,q'}_z)^2}{R^2_x+\eta^2_v}.
 \end{align}
Then, for the above to hold, it suffices that 
 $$\sigma^2_{min} (W^{q,q'}) > \lim_{k\to \infty} \frac{ (\overline{\delta}^{q,tri}_{r,k}+ \overline{\delta}^{q',tri}_{r,k}+R^{q,q'}_z)^2}{R^2_x+\eta^2_v},$$
 which is equivalent to \eqref{item:second} by \eqref{eq:r_tri_converge}.
 
As for the sufficiency of \eqref{item:first}, we show that the sufficient conditions in \eqref{item:first} imply that if $q \ne q^*$, then the residual signal $r^q_k$ \yong{grows}  unbounded. Then, since we showed in Lemma \ref{lem:delta_inf_diverge} that the computed upper bound signal $\hat{\delta}^q_{r,k}$ is bounded, so there must exist a time step $K$ such that $r^q_k > \hat{\delta}^q_{r,k}$ for $k \geq K$, and hence, mode $q$ will be eliminated after time step $K$ and therefor\yong{e,} mode detectability holds. To do so, we show that if \eqref{item:first} holds, then the right hand side of \eqref{eq:resid_comp} \yong{grows}  unbounded, and so 
\yong{does} $r^q_k$. First, note that by Lemma \ref{lem:delta_inf_diverge}, the first term in the right hand side of \eqref{eq:resid_comp}, i.e., $\mathbb{A}^q_kt^q_k$, is bounded. Moreover, \eqref{eq:taylor_error} and the facts that the state space is bounded and $\|J^{q^*}_{f,0}\|_2 <1$ imply that $\epsilon^{q^*}_k$, i.e., the third term in the right hand side of \eqref{eq:resid_comp}, \yong{is bounded}. 
  
Next, we show that the second term in the right hand side of \eqref{eq:resid_comp}, i.e.  $\alpha^{q^*}_k$, \yong{grows}  unbounded. Consequently, the summation of the two bounded terms $\mathbb{A}^q_kt^q_k$ and $\epsilon^{q^*}_k$ as well as the unbounded term $\alpha^{q^*}_k$ will be unbounded. To show that $\alpha^{q^*}_k$ \yong{grows}  unbounded,  
it suffices to show that for any $c>0$, any specific mode $q$ with the true mode being  $q^*$, there exists \yong{a} large enough $K$ such that:
\begin{align}
\nonumber \|\alpha^{q^*}_K\|_2=\left\|\begin{bmatrix} \mathbb{T}^{q,q^*}_K & \mathbb{C}^{q,q^*}_{u,K} & \mathbb{C}^{q,q^*}_{d,K}  \end{bmatrix} \begin{bmatrix} \zeta^{\top}_K & u^{q^*\top}_{0:k} & d^{q*\top}_{0:K} \end{bmatrix}^{\top}\right\|_2>c,
 \end{align}
{with $ \mathbb{T}^{q,q^*}_K \triangleq (T^{q}-T^{q^*})\begin{bmatrix} C^{q^*}_{x,K} & C^{q^*}_{\tilde{w},K} \end{bmatrix}$, $\mathbb{C}^{q,q^*}_{u,K} \triangleq (T^{q}-T^{q^*}){C}^{q^*}_{u,K}$, $\mathbb{C}^{q,q^*}_{d,K} \triangleq (T^{q}-T^{q^*}){C}^{q^*}_{d,K}$ and $\zeta_K \triangleq \begin{bmatrix} x^\top_0 & {\tilde{w}^{q^*\top}_{0:K}} \end{bmatrix}^\top $.} Since $q^*$ is unknown, a sufficient condition {to satisfy} 
the above equality is \yong{that} $ \forall c>0, \forall q' \neq q \in Q, \exists K \in \mathbb{N}  $ \yong{such that:}
 \begin{align}
\nonumber \left\|\begin{bmatrix} \mathbb{T}^{q,q'}_K & \mathbb{C}^{q,q'}_{u,K} & \mathbb{C}^{q,q'}_{d,K}  \end{bmatrix} \begin{bmatrix} \zeta^{\top}_K & u^{q'\top}_{0:K} & d^{q*\top}_{0:k} \end{bmatrix}^{\top}\right\|_2>c.
 \end{align}
So it suffices that $\forall c>0, \forall q' \neq q \in Q, \exists \overline{d} \in \mathbb{R}, \exists K \in \mathbb{N}$, such that: 
\yong{$$\underline{T}^{q,q'}_k>c,$$
where}
  \begin{align}
 \nonumber \yong{\underline{T}^{q,q'}_k \triangleq} &\min \limits_{\zeta'_k} \left\|\begin{bmatrix} \mathbb{T}^{q,q'}_K & \mathbb{C}^{q,q'}_{u,K} & \mathbb{C}^{q,q'}_{d,K}  \end{bmatrix} {\zeta'_K}\right\|_2
 \\[-0.2cm] \nonumber &s.t. \ \zeta'_K=\begin{bmatrix} x^\top_0 & {\tilde{w}^{q^*\top}_{0:K}} & u^{q'\top}_{0:K} & d^{q*\top}_{0:K} \end{bmatrix}^{\top},\|d^{q*}_{0:K}\|_2 \geq \overline{d},
 \\[-0.05cm] \nonumber &\ \ \ \ \  \|w_i\|_{\infty} \leq \eta_w, \ \| v_j \|_{\infty} \leq \eta_v,  i \in \{ 0,...,K-1 \},  j \in \{ 0,...,K \}{.} 
  \end{align}
\yong{Once again, by the} matrix lower bound theorem, a sufficient condition  for the above inequality to hold is that $\exists \overline{d} \in \mathbb{R}, \exists K \in \mathbb{N}$, such that: \vspace{-0.15cm}
\yong{$$\underline{\underline{T}}^{q,q'}_k>\frac{c}{\sigma_{min}(\begin{bmatrix}  \mathbb{T}^{q,q'}_K & \mathbb{C}^{q,q'}_{u,K} & \mathbb{C}^{q,q'}_{d,K}  
 \end{bmatrix})},$$
 where}
  \begin{align} \label{eq:suff}
  \yong{\underline{\underline{T}}^{q,q'}_k\triangleq} &\min \limits_{\tilde{w}^{q^*}_{0:K},d^{q^*}_{0:K}}\|\zeta'_K\|_2  
 \\[-0.2cm] \nonumber &s.t. \ \zeta'_K=\begin{bmatrix} x^\top_0 & {\tilde{w}^{q^*\top}_{0:K}} & u^{q'\top}_{0:k} & d^{q*\top}_{0:K} \end{bmatrix}^{\top},\|d^{q*}_{0:K}\|_2 \geq \overline{d},
 \\ \nonumber  &\ \ \ \ \   \ \|w_i\|_{\infty} \leq \eta_w, \ \| v_j \|_{\infty} \leq \eta_v, i \in \{ 0,...,K-1 \},  j \in \{ 0,...,K \}.
  \end{align}
  Finally, since 
  \begin{align}
  \nonumber \|\zeta'_K\|_2 =\left\| \begin{bmatrix} x^\top_0 & {\tilde{w}^{q^*\top}_{0:k}} & u^{q'\top}_{0:K} & d^{q*\top}_{0:K} \end{bmatrix}\right\|_2 \geq \sqrt{0^2+0^2+0^2+\|d^{q*\top}_{0:K}\|^2_2}=\| d^{q*\top}_{0:K}\|_2,
  \end{align}
 then a sufficient condition for \eqref{eq:suff} \yong{to hold} is that 
   \begin{align} \label{eq:suff2}
  \|d^{q*\top}_{0:K}\|_2 >\frac{c}{\sigma_{min}{(}\begin{bmatrix}  \mathbb{T}^{q,q'}_K & \mathbb{C}^{q,q'}_{u,K} & \mathbb{C}^{q,q'}_{d,K}\end{bmatrix}{)}}.
  \end{align}
Now, suppose that $T^q_2 \neq T^{q'}_2$ (otherwise the matrix in the denominator of \eqref{eq:suff2} is zero and it never holds). \yong{Then,} 
the right hand side of \eqref{eq:suff2} converges \yong{asymptotically} to $\tilde{\delta} \triangleq \max \{0, \frac{c}{\overline{\sigma}^{q,q'}} \}$, since 
the 
\yong{smallest} singular value in the denominator either diverges, or converges to some steady value $\overline{\sigma}^{q,q'}$. So we set $\overline{d}$ \yong{to be} equal to any real number \yong{that is strictly greater} than $\tilde{\delta}$. By \yong{the} unlimited energy assumption for 
\yong{the unknown input} signal, 
\yong{at} some large enough time step $K$, the monotone\yong{ly} increasing function $\|d^{q*}_{0:k}\|_2$ \yong{will exceed} $\overline{d}$ and so, the system will be mode detectable. 
\end{proof}
\section{Simulation Results} \label{sec:examples}
In this section, we evaluate the effectiveness of our Simultaneous Mode, Input, and State Set-Valued Observer (SMIS), by comparing its performance with 
\yong{the} LMI-based $\mathcal{H}_{\infty}$-
observer in \cite{zheng2020asynchronous} that 
obtains point state estimates. For comparison, we apply the two observers on a modified version of the discrete-time nonlinear switched system in \cite{zheng2020asynchronous}, 
where we increase the number of modes (subsystems) to five, i.e., $Q=5$. The considered system is in the form of \eqref{eq:sys_desc}, with the following parameters: $n=l=2, m=p=1$ and $\forall q=1,\dots,5$: 
\begin{align*}
B^q=D^q=0_{2 \times 1}, \ f^q(x)=\tilde{A}^q\gamma(x)+\hat{A}^qx, 
\end{align*}
where $\gamma(x) \triangleq \frac{1}{2} \begin{bmatrix} \sin(x_1) & \sin(x_2) \end{bmatrix}^\top$. Moreover, 
\begin{align*}
\hat{A}^1\hspace{-.1cm}&=\hspace{-.1cm}\begin{bmatrix} 0.3 & 0 \\ 0.4 & -0.7 \end{bmatrix}\hspace{-.1cm},\tilde{A}^1\hspace{-.1cm}=\hspace{-.1cm}\begin{bmatrix} 0.8 & -0.4 \\ 0.4 & -0.8 \end{bmatrix}\hspace{-.1cm}, C^1\hspace{-.1cm}=\hspace{-.1cm}\begin{bmatrix} 0.8 & 0.1 \\ 0.8 & 0.1 \end{bmatrix}\hspace{-.1cm}, H^1\hspace{-.1cm}=\hspace{-.1cm}\begin{bmatrix} 0.5 \\ 0.5 \end{bmatrix}\hspace{-.1cm}, G^1\hspace{-.1cm}=\hspace{-.1cm}\begin{bmatrix} 0.4 \\ -0.1 \end{bmatrix}\hspace{-.1cm},\\
\hat{A}^2\hspace{-.1cm}&=\hspace{-.2cm}\begin{bmatrix} -0.5 & 0 \\ 1 & -0.5 \end{bmatrix}\hspace{-.1cm},\tilde{A}^2\hspace{-.1cm}=\hspace{-.2cm}\begin{bmatrix} 0.6 & -0.1 \\ 0.1 & -0.6 \end{bmatrix}\hspace{-.1cm}, C^2\hspace{-.1cm}=\hspace{-.2cm}\begin{bmatrix} 0.5 & -0.1 \\ 0.6 & -0.1 \end{bmatrix}\hspace{-.1cm}, H^2\hspace{-.1cm}=\hspace{-.2cm}\begin{bmatrix} 0.6 \\ -0.5 \end{bmatrix}\hspace{-.1cm}, G^2\hspace{-.1cm}=\hspace{-.2cm}\begin{bmatrix} -0.2 \\ 0.1 \end{bmatrix}\hspace{-.1cm},\\
\hat{A}^3\hspace{-.1cm}&=\hspace{-.2cm}\begin{bmatrix} 0.6 & -0.2 \\ -0.4 & 0.7 \end{bmatrix}\hspace{-.1cm},\tilde{A}^3\hspace{-.1cm}=\hspace{-.2cm}\begin{bmatrix} 0.4 & -0.8 \\ -0.2 & -0.4 \end{bmatrix}\hspace{-.1cm}, C^3\hspace{-.1cm}=\hspace{-.2cm}\begin{bmatrix} 0.2 & 0.7 \\ -0.8 & 0.2 \end{bmatrix}\hspace{-.1cm}, H^3\hspace{-.1cm}=\hspace{-.2cm}\begin{bmatrix} -0.5 \\ 0.5 \end{bmatrix}\hspace{-.1cm}, G^3\hspace{-.1cm}=\hspace{-.2cm}\begin{bmatrix} 0.5 \\ 0.2 \end{bmatrix}\hspace{-.1cm},\\
\hat{A}^4\hspace{-.1cm}&=\hspace{-.2cm}\begin{bmatrix} -0.6 & -0.2 \\ 0.4 & 0.7 \end{bmatrix}\hspace{-.1cm},\tilde{A}^4\hspace{-.1cm}=\hspace{-.2cm}\begin{bmatrix} -0.4 & 0.9 \\ 0.2 & -0.3 \end{bmatrix}\hspace{-.1cm}, C^4\hspace{-.1cm}=\hspace{-.2cm}\begin{bmatrix} 0.3 & -0.7 \\ 0.8 & -0.6 \end{bmatrix}\hspace{-.1cm}, H^4\hspace{-.1cm}=\hspace{-.2cm}\begin{bmatrix} -0.4 \\ 0.9 \end{bmatrix}\hspace{-.1cm}, G^4\hspace{-.1cm}=\hspace{-.2cm}\begin{bmatrix} 0.9 \\ 0.3 \end{bmatrix}\hspace{-.1cm},\\
\hat{A}^5\hspace{-.1cm}&=\hspace{-.2cm}\begin{bmatrix} -0.2 & 0.9 \\ -0.1 & 0.3 \end{bmatrix}\hspace{-.1cm},\tilde{A}^5\hspace{-.1cm}=\hspace{-.2cm}\begin{bmatrix} -0.8 & 0.1 \\ 0.3 & -0.7 \end{bmatrix}\hspace{-.1cm}, C^5\hspace{-.1cm}=\hspace{-.2cm}\begin{bmatrix} -0.3 & -0.1 \\ -0.8 & 1 \end{bmatrix}\hspace{-.1cm}, H^5\hspace{-.1cm}=\hspace{-.2cm}\begin{bmatrix} -0.1 \\ 0.1 \end{bmatrix}\hspace{-.1cm}, G^5\hspace{-.1cm}=\hspace{-.2cm}\begin{bmatrix} 0.6 \\ 0.1 \end{bmatrix}\hspace{-.1cm}.
\end{align*}
The initial state estimate and noise signals \yong{satisfy $\|x_0\|\yong{_2}\le \delta_x=0.5$, $\|w_k\|\yong{_2}\le \eta_w=0.02$ and $\|v_k\|\yong{_2}\le \eta_w=0.02$.} 
Furthermore, we assume \yong{that} $\hat{x}_{0|0}=\begin{bmatrix} 0.4 & 0.4 \end{bmatrix}^\top$. 

We consider two scenarios for the unknown input. In the first (\yong{Scenario I}), the unknown input is a random signal with bounded norm, i.e., $\|d_k\|_2 \leq 0.4$, 
while $d_k$ in the second scenario (\yong{Scenario II}) is a time-varying signal that becomes unbounded as time increases. As is demonstrated in Figure \ref{fig:estimates_scen1}, in the first scenario, i.e., with bounded unknown inputs, the set estimates of our approach (i.e., SMIS estimates) converge to steady-state values and the point estimates of the 
approach in \cite{zheng2020asynchronous} are within the predicted upper bounds and exhibit convergent behavior. 
\begin{figure}[!hp]
\begin{center}
\includegraphics[scale=0.19,trim=20mm 3mm 20mm 5mm,clip]{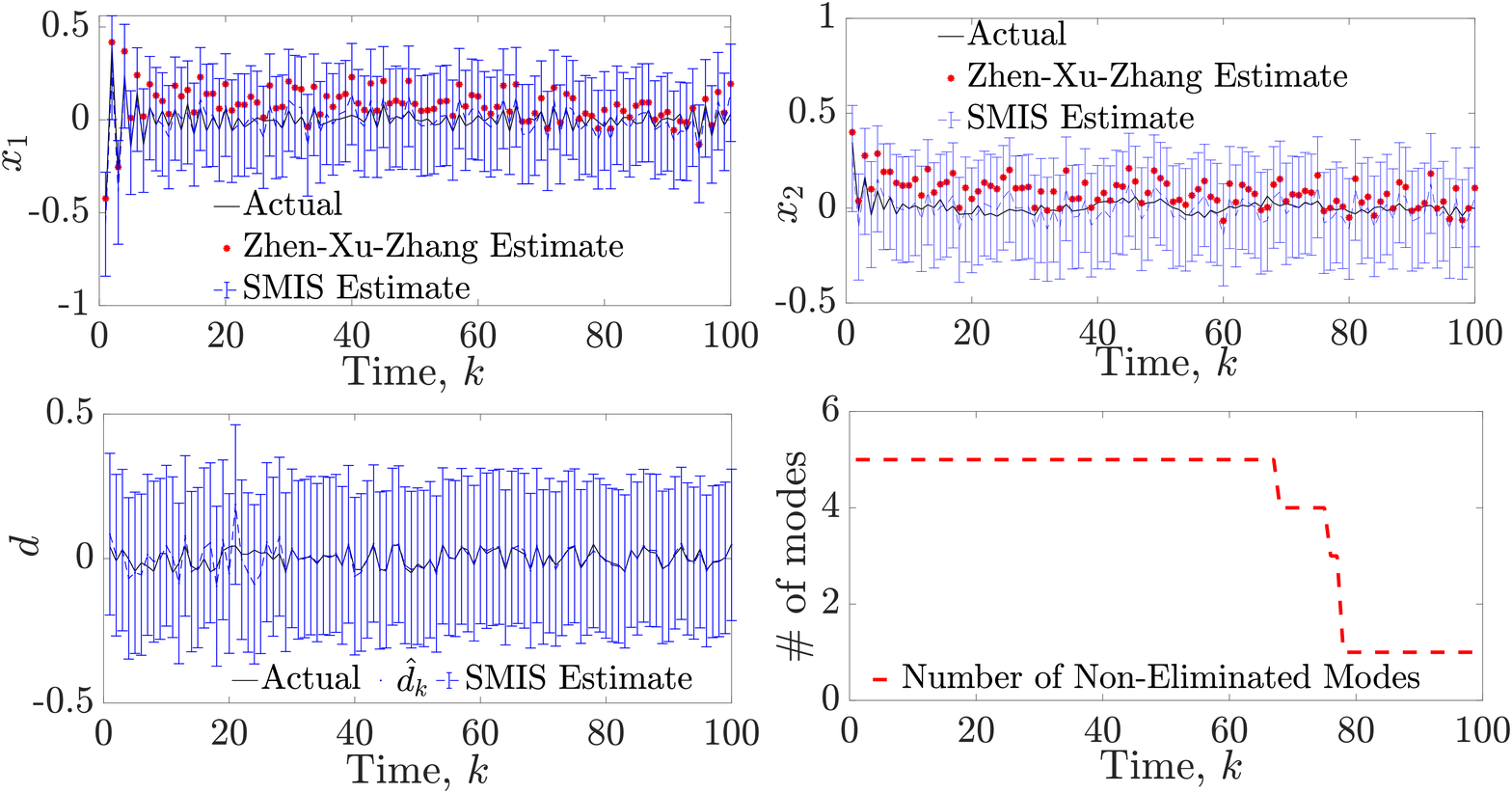}
\caption{Actual states $x_1$, $x_2$, and their estimates, as well as \yong{the} unknown input $d$ and its estimates, and the number of non-eliminated modes \yong{at each time step} in the bounded unknown input scenario \yong{(Scenario I), when} 
applying the observer in \cite{zheng2020asynchronous} (Zhen-Xu-Zhang Estimate) and our {proposed} 
observer (SMIS Estimate).}\label{fig:estimates_scen1}
\includegraphics[scale=0.195,trim=20mm 8mm 20mm 5mm,clip]{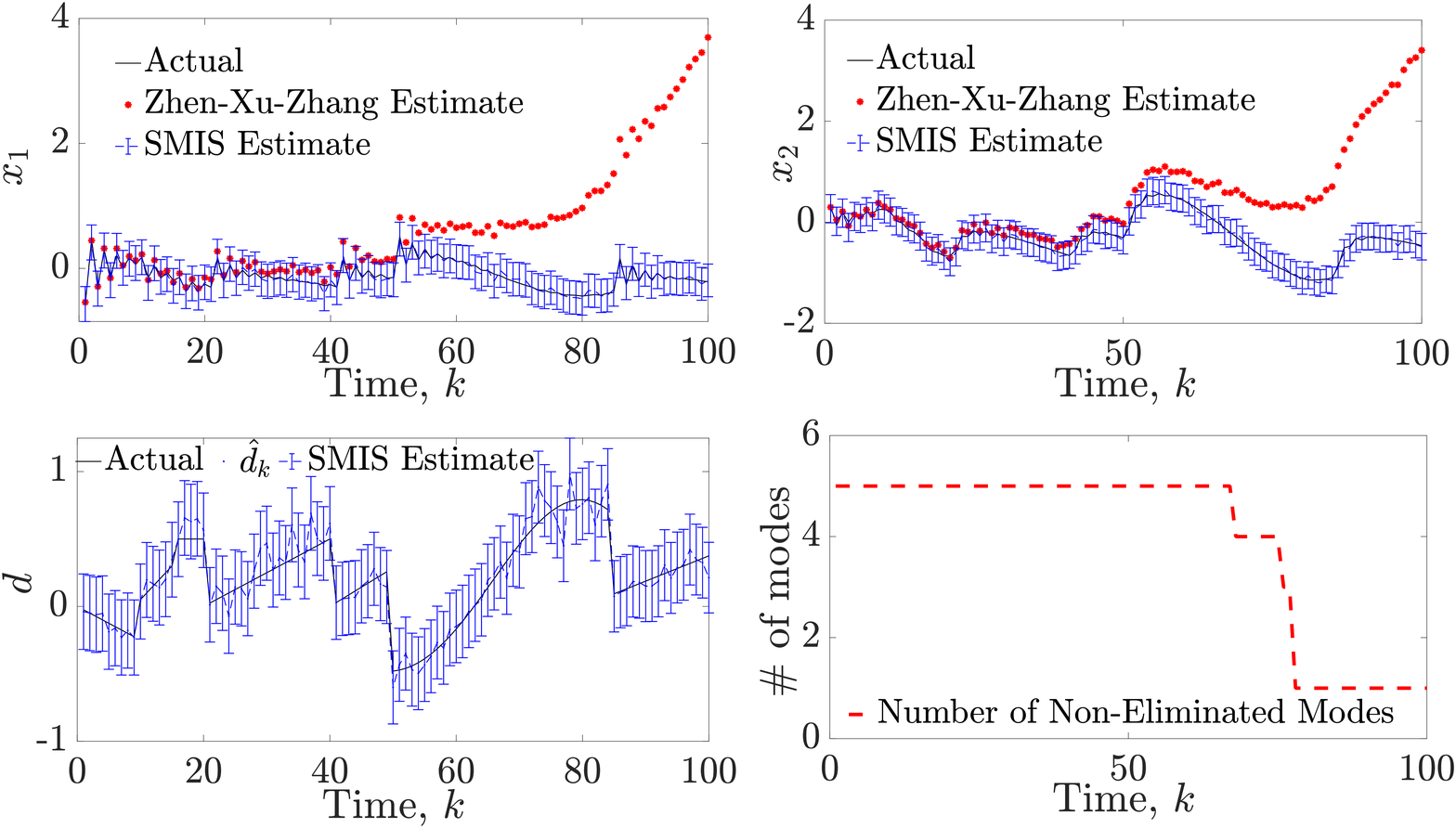}
\caption{Actual states $x_1$, $x_2$, and their estimates, as well as \yong{the} unknown input $d$ and its estimates, and the number of non-eliminated modes 
\yong{at each time step} in the unbounded unknown input scenario \yong{(Scenario II), when} 
applying the observer in \cite{zheng2020asynchronous} (Zhen-Xu-Zhang Estimate) and our {proposed} 
observer (SMIS Estimate). \label{fig:estimates_scen2} }
\end{center}
\end{figure}
More interestingly, considering the second scenario, i.e., with unbounded unknown inputs, Figure \ref{fig:estimates_scen2} shows that our set-valued estimates still converge, i.e., our observer remains stable, while the estimates of the  approach \yong{in \cite{zheng2020asynchronous}} exceed the boundaries of the compatible sets of states and inputs \yong{of our approach} after some time steps  and display a divergent behavior 
(cf. Figure \ref{fig:estimates_scen2}).

\begin{table}[th] 
	\centering \setlength{\tabcolsep}{3pt}
	\caption{Different modes and their $T^q_2$ \yong{in Scenario I (i.e., with bounded $d_k$)}. 
	\label{fig:mode_table}} 
	\begin{tabular}{| c | c |}
		\hline
		  Mode & $T_2^q$ \\ \hline
		$q=1$  & [0.3629  -0.2179 ]$^\top$ \\ \hline
		$q=2$ &  [0.1191  0.8715  ]$^\top$ \\ \hline
		$q=3$ &  [-0.6468  0.8390  ]$^\top$ \\ \hline
		$q=4$ &  [0.8103  -0.6681  ]$^\top$ \\ \hline
		$q=5$ &  [0.2780  -0.6793  ]$^\top$ \\ \hline
	\end{tabular} \normalsize 
	\centering \setlength{\tabcolsep}{3pt}
	\caption{Different modes and their $T^q_2$ \yong{in Scenario II (i.e., with unbounded $d_k$)}. 
	\label{fig:mode_table_2}} 
	\begin{tabular}{| c | c |}
		\hline
		  Mode & $T_2^q$ \\ \hline
		$q=1$  & [0.4730  -0.3280 ]$^\top$ \\ \hline
		$q=2$ &  [0.2202  0.9826  ]$^\top$ \\ \hline
		$q=3$ &  [-0.7579  0.9401  ]$^\top$ \\ \hline
		$q=4$ &  [0.9214  -0.7792  ]$^\top$ \\ \hline
		$q=5$ &  [0.3891  -0.7804  ]$^\top$ \\ \hline
	\end{tabular} \normalsize \vspace{-0.1cm}
\end{table}

\yong{Further,} Tables \ref{fig:mode_table} and \ref{fig:mode_table_2} 
\yong{show} the matrix $T^q_2$ for each mode $q$ \yong{for Scenarios I and II,} 
respectively. \yong{It can be verified that} 
the {second} 
set of sufficient conditions in Theorem \ref{thm:strong_mode_detect} holds, i.e., {$T^q_2 \neq T^{q'}_2$ for all $q \neq q'$,} for both scenarios. 
Hence, we expect that all false modes are eliminated, i.e., 
\yong{exactly} one (true) mode remains, \yong{after some large enough time in} 
both scenarios, which \yong{is indeed what we observe} 
in Figures \ref{fig:estimates_scen1} and \ref{fig:estimates_scen2}, where the number of non-eliminated modes at each time step is \yong{shown}. 

Moreover, for each 
mode $q$, the signals $\|r^q_{k}\|_2$, $\|r^{q|*}_{k}\|_2$, $\delta^{q,tri}_{r,k}$ and $\delta^{q,inf}_{r,k}$ are depicted in Figures \ref{fig:residuala_scen1} and \ref{fig:residuala_scen2} for \yong{Scenarios I and II}, respectively. 
\yong{In both scenarios, we observe that $\delta^{q,inf}_{r,k}$ is smaller than $\delta^{q,tri}_{r,k}$ up until approximately 40 time steps, after which $\delta^{q,tri}_{r,k}$ is smaller/tighter. This is one of the main reasons we considered the minimum of both as the threshold in our mode elimination algorithm (also see Remark \ref{rem:1}).} 
Furthermore, for all modes, $\delta^{q,tri}_{r,k}$ is eventually  convergent while $\delta^{q,inf}_{r,k}$ diverges, as 
\yong{proven} in Lemma \ref{lem:delta_inf_diverge}. So, after some large enough time, $\delta^{q,tri}_{r,k}$ can be used as our upper bound \yong{threshold}, while $\delta^{q,inf}_{r,k}$ becomes \yong{ineffective}. 
 \begin{figure}[!tp]
\begin{center}
\includegraphics[scale=0.2306,trim=33mm 3mm 20mm 5mm,clip]{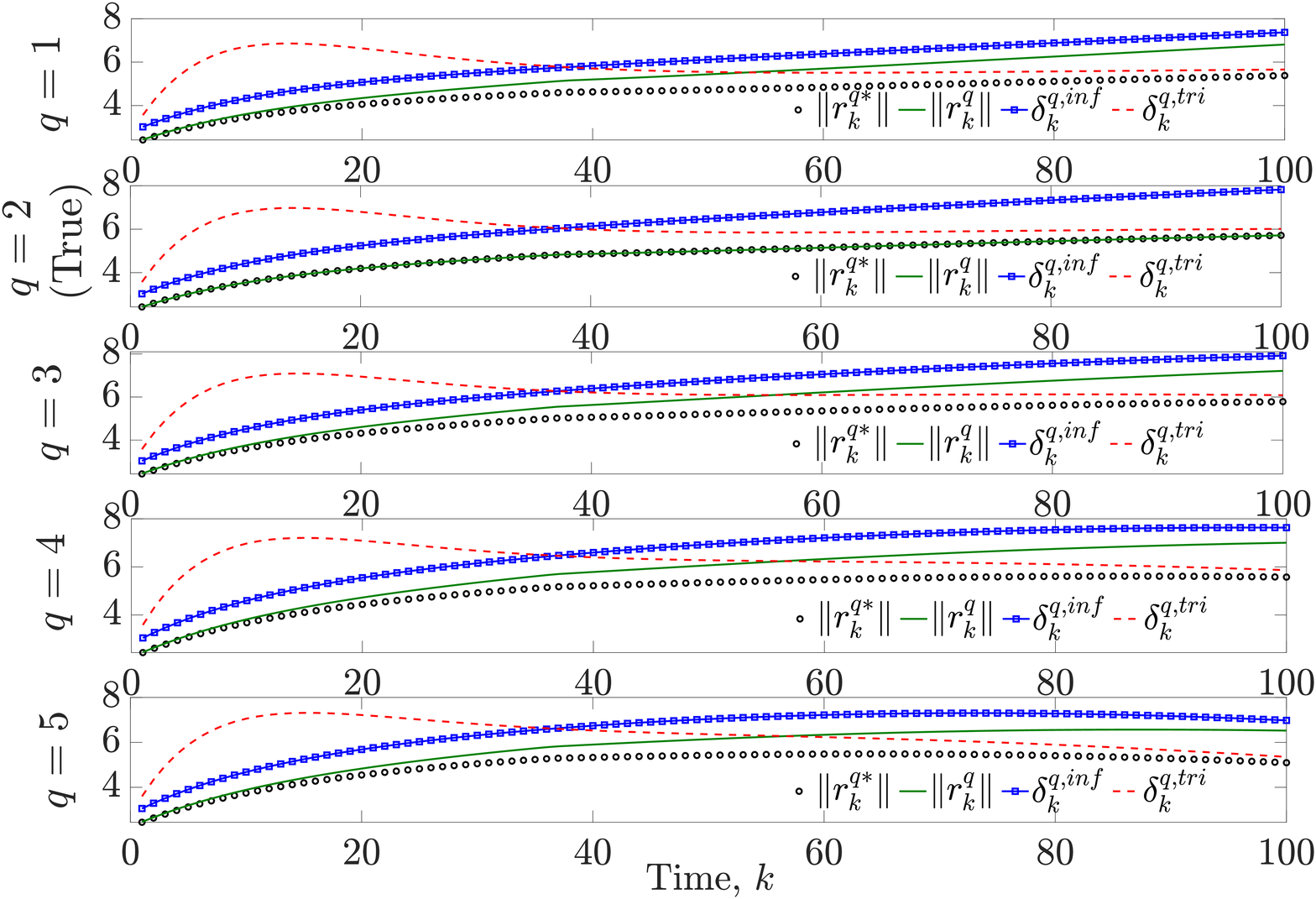}
\caption{$\|r^{q}_{r,k}\|_2$,$\|r^{q|*}_{r,k}\|_2$ and their upper bounds for different modes in the bounded unknown input scenario \yong{(Scenario I).} 
 \label{fig:residuala_scen1} }
\includegraphics[scale=0.237506,trim=33mm 3mm 20mm 5mm,clip]{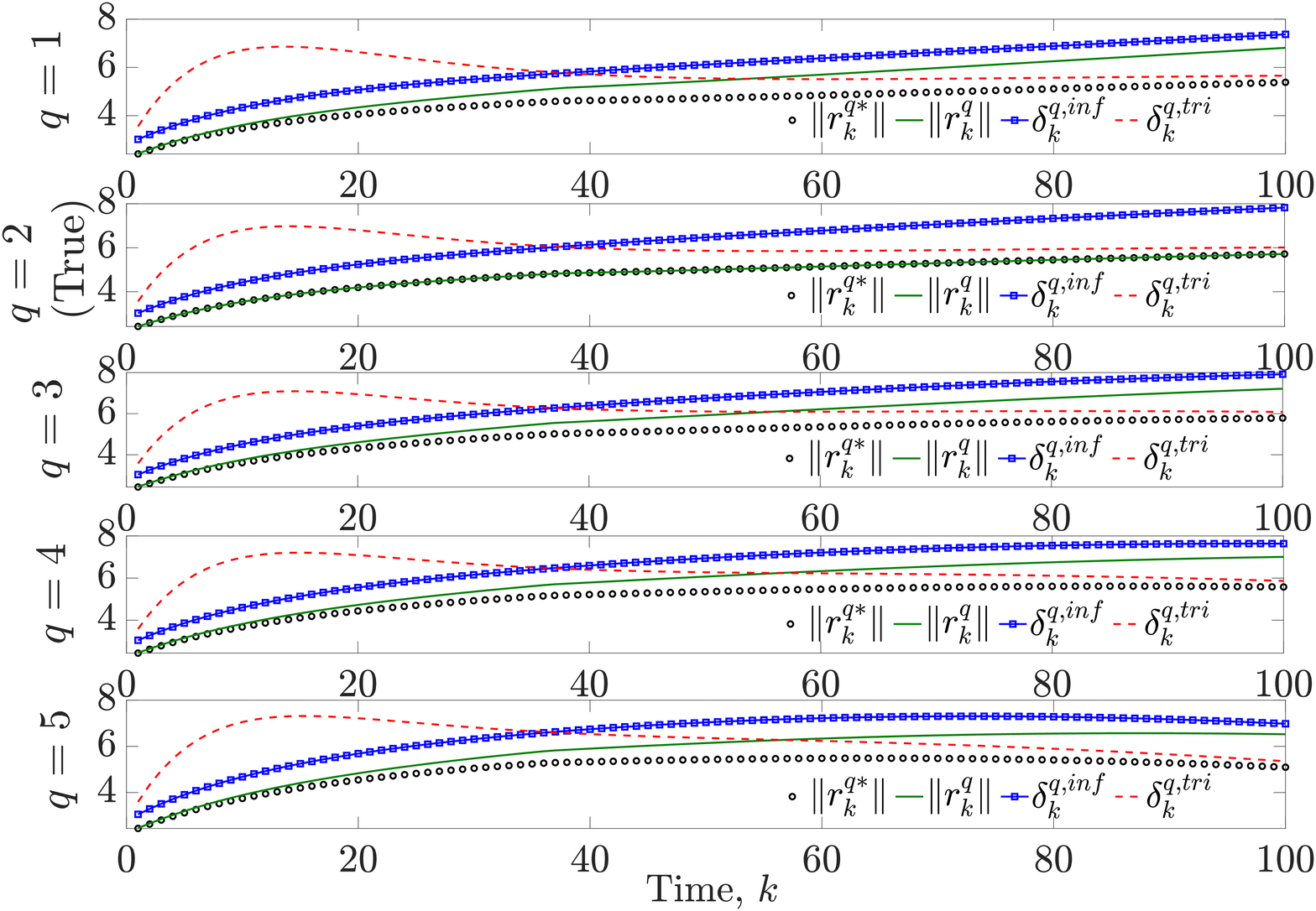}
\caption{$\|r^{q}_{r,k}\|_2$,$\|r^{q|*}_{r,k}\|_2$ and their upper bounds for different modes in the unbounded unknown input scenario \yong{(Scenario II).} 
 \label{fig:residuala_scen2} }
\end{center}
\end{figure}
\section{Conclusion} \label{sec:conclude}
\yong{This paper introduced a novel multiple-model approach for simultaneous mode, unknown input and state estimation for hidden mode switched nonlinear systems with bounded-norm noise and unknown inputs. The proposed approach consists of a bank of mode-matched state and unknown input observer that is optimal in the $\mathcal{H}_\infty$ sense and a mode observer, which eliminates inconsistent modes and their corresponding observers at each time step. The proposed mode elimination criterion is based on the use of a provably finite-valued upper bound for the norm of a residual signal as the threshold. Moreover, we provided a tractable approach to compute the threshold signal and proved}
the convergence of the upper bound\yong{/threshold} signal 
		\yong{as well as} derived sufficient conditions for eventually eliminating all false modes \yong{when} {using} our {mode elimination algorithm}. 
		Finally, we demonstrated  the effectiveness of our {observer} using an illustrative example, where we compared our approach with \yong{an existing $\mathcal{H}_\infty$ observer in the literature under two different scenarios.} 

\section*{Acknowledgments}

This work is partially supported by the National Science Foundation, USA grants CNS-1943545 and CNS-1932066.

\section*{References}
\bibliography{biblio,fault}


\appendix 
\noindent \section*{Appendices}
 \section{System Transformation \yong{\cite[Appendix A.1]{khajenejad2020simultaneousnonlinear}} } \label{app:transformation}
For $q \in \mathbb{Q}$, let $p_{H^q}\triangleq {\rm rk} (H^q)$. Using singular value decomposition, we rewrite the direct feedthrough matrix $H^q$  as
$H^q= \begin{bmatrix}U^q_{1}& U^q_{2} \end{bmatrix} \begin{bmatrix} \Sigma^q & 0 \\ 0 & 0 \end{bmatrix} \begin{bmatrix} V_{1}^{q \top} \\ V_{2}^{q \top} \end{bmatrix}$, 
where $\Sigma^q \in \mathbb{R}^{p_{H^q} \times p_{H^q}}$ is a diagonal matrix of full rank, $U^q_{1} \in \mathbb{R}^{l \times p_{H^q}}$, $U^q_{2} \in \mathbb{R}^{l \times (l-p_{H^q})}$, $V^q_{1} \in \mathbb{R}^{p \times p_{H^q}}$ and $V^q_{2} \in \mathbb{R}^{p \times (p-p_{H^q})}$, while $U^q\triangleq \begin{bmatrix} U^q_{1} & U^q_{2} \end{bmatrix}$ and $V^q\triangleq \begin{bmatrix} V^q_{1} & V^q_{2} \end{bmatrix}$ are unitary matrices. 
When there is no direct feedthrough, $\Sigma^q$, $U^q_{1}$ and $V^q_{1}$ are empty matrices\footnote{\ Based on the convention that the inverse of an empty matrix is an empty matrix and the assumption that operations with empty matrices are possible.}, 
and $U^q_{2}$ and $V^q_{2}$ are arbitrary unitary matrices, {while when $p_{H^q}=p=l$, $U^q_{2}$ and $V^q_{2}$ are empty matrices, and $U^q_{1}$ and $\Sigma^q$ are identity matrices}.
Then, 
we decouple the unknown input into two orthogonal components and since $V^q$ is unitary, we obtain: 
\begin{align}\label{eq:dec}
d^q_{1,k}=V_{1}^{q\top} d^q_k, \quad
d^q_{2,k}=V_{2}^{q\top} d^q_k, \quad d^q_k =V^q_{1} d^q_{1,k}+V^q_{2} d^q_{2,k}.
\end{align}
 So, we can represent system \eqref{eq:sys_desc} as:
\begin{align}
\hspace{-0.3cm}\begin{array}{rl} x^q_{k+1}
&= f^q(x_k) +B^qu^q_k+ G^q_{1} d^q_{1,k} +G^q_{2} d^q_{2,k}+W^qw^q_k, 
\\  y_k
&={C}^q x_k + D^q_ku^q_k+ H^q_{1} d^q_{1,k}+v^q_k, \end{array}\hspace{-0.3cm}   \label{eq:y}
\end{align}
where $G^q_{1} \triangleq G^q V^q_{1}$, $G^q_{2} \triangleq G^q V^q_{2}$ and $H^q_{1} \triangleq H^q V^q_{1}=U^q_{1} \Sigma^q$. 
Next, the output $y_k$ is decoupled 
using a nonsingular transformation $T^q =\begin{bmatrix} T_{1}^{q\top} & T_{2}^{q\top} \end{bmatrix}^\top \triangleq U^{q\top} =\begin{bmatrix} U^q_{1} & U^q_{2} \end{bmatrix}^\top $ 
to obtain $z^q_{1,k} \in \mathbb{R}^{p_{H^q}}$ and $z^q_{2,k} \in \mathbb{R}^{l-p_{H^q}}$ given by
\begin{gather} \label{eq:sysY} \hspace{-0.2cm}\begin{array}{lll}
z^q_{1,k} &\triangleq T^q_{1} y_k =U_{1}^{q\top} y_k = {C}^q_{1} x_k + \Sigma^q d^q_{1,k} + {D}^q_{k,1} u^q_k + {v}^q_{1,k},\\
z^q_{2,k} &\triangleq T^q_{2} y_k =U_{2}^{q\top} y_k =  {C}^q_{2}  x_k + {D}^q_{k,2} u^q_k + {v}^q_{2,k},
\end{array} 
\end{gather}
where ${C}^q_{1} \triangleq U_1^{q\top} {C}^q$, ${C}^q_{2} \triangleq U_{2}^{q\top} {C}^q$, ${D}^q_{k,1} \triangleq U_{1}^{q\top} {D}^q_k$, ${D}^q_{k,2} \triangleq  U_{2}^{q\top} {D}^q_k$, ${v}^q_{1,k} \triangleq U_{1}^{q\top} {v}^q_k$ and ${v}^q_{2,k} \triangleq  U_{2}^{q\top} {v}^q_k$. This transformation is also chosen such that 
\begin{align*}
\left\|\begin{bmatrix} {{v}^q_{1,k}}^\top & {{v}^q_{2,k}}^\top \end{bmatrix}^\top\right\|_2=\| U^{q\top} {v}^q_k\|\moh{_2}=\|{v}^q_k\|_2. 
\end{align*}
\section{Matrices and Parameters in Proposition \ref{prop:filterbanks}}\label{app:matrices}
\vspace{-1cm}
\begin{align*}
\tilde{Y}^q_1 &\triangleq (P-YC^q_2)\Phi^q, \quad \tilde{Y}^q_2 \triangleq -(P-YC^q_2)\Phi^q\Psi^q, \\ \tilde{\mathbf{M}}_{1} &\triangleq -{\kappa}I-\breve{Q}, \quad \tilde{\mathbf{M}}_{2} \triangleq -{\kappa}{(L_f^q)}^2I+(1-\alpha)P-\tilde{\Gamma}, \quad \tilde{\mathbf{M}}_{3} \triangleq {\kappa}I, \\
\mathcal{N}^q_{21} &\triangleq \Psi^{q\top} \Phi^{q\top} (PR^q-Y\Omega^q-C^{q\top}_2 Y^\top R^q), \\
\mathcal{N}^q_{11} &\triangleq \rho^2 I + 2R^{q\top} Y\Omega^q-R^{q\top} P R^q-\Omega^{q\top} (\Gamma+(\varepsilon^{-1}_1+\varepsilon^{-1}_2)I) \Omega^q, \\
 \mathcal{N}^q_{31} &\triangleq \Phi^{q\top}(Y \Omega^q+C^{q\top}_2 Y^\top R^q -PR^q), \\ 
 \mathcal{N}^q_{33} &\triangleq -\varepsilon_2\Phi^{q\top} C^{q\top}_2 C^q_2 \Phi^q+I ,\\
  \mathcal{N}^q_{22} &\triangleq  -I+\alpha P-\varepsilon_1\Psi^{q\top} \Phi^{q\top} C^{q\top}_2 C^q_2 \Phi^q \Psi^q-{L_f^q}^2I, \\
  {\delta}_\infty^x &\triangleq 
\begin{cases} {{\delta}_{\infty,1}^{x,q}}, & \text{if} \ \theta^q_1 <1,\theta^q_2 \geq 1, \\ {{\delta}_{\infty,2}^{x,q}}, & \text{if} \ \theta^q_1 \geq 1,\theta^q_2 < 1, \\ \min({{\delta}_{\infty,1}^{x,q},{\delta}_{\infty,2}^{q,x}}
), &  \text{if} \ \theta^q_1 < 1,\theta^q_2 < 1, \end{cases}, \\ {\delta}_{\infty,1}^{x,q}& \triangleq \rho^\star_q\sqrt{\frac{{\eta^q_w}^2+{\eta^q_v}^2}{{\lambda_{\min} (P^q)(1-\theta^q_1)}}}, \\
{\delta}_{\infty,2}^{x,q}&\triangleq \frac{ \overline{\eta}^q}{1-\theta^q_2}, \\ 
\theta^q_1 &\triangleq {{\textstyle\frac{|\lambda_{\max}(P^q)-1|}{\lambda_{\min}(P^q)}}}, \\ {\theta_2}^q &\triangleq (L^q_f+\| \Psi^q \|_2)\|(I-\tilde{L}^qC^q_2)\Phi^q \|_2, \\
 \overline{\eta}^q &\triangleq \| \Re^q \|_2 \eta^q_v + \| \Psi^q \Phi^q W^q\|_2 \eta^q_w, \\\Omega^q &\triangleq C^q_2R^q-Q^q,\\ 
 {\delta}^{d,q}_{\infty} &\triangleq \beta^q {\delta}^{x,q}_{\infty}+ \overline{\alpha}^q, \\
  \Re^q &\triangleq -(\Psi^q \Phi^q G^q_1 M^q_1 T^q_1 +\Psi^q G^q_2 M^q_2 T^q_2 + \tilde{L}^q T^q_2), \\
   \beta^q &\triangleq \|V_1^qM^q_1C^q_1-V^q_2M^q_2C^q_2\Psi^q \|_2+L^q_f \|V^q_2M^q_2C^q_2 \|_2, \\
   \overline{\alpha}^q &\triangleq \| V^q_2 M^q_2 C^q_2 \|_2 \eta^q_w
 +\big{[}\|(V^q_2 M^q_2 C^q_2G^q_1-V^q_1)M^q_1T^q_1\|\yong{_2}+\|V^q_2M^q_2T^q_2\|\yong{_2} \big{]} \eta^q_v.
\end{align*}

\end{document}